\newtheorem{theorem}{Theorem}[section]
\newtheorem{lemma}[theorem]{Lemma}
\newtheorem{corollary}[theorem]{Corollary}
\newtheorem{remark}{Remark}
\newtheorem{definition}{Definition}
\newtheorem{proposition}{Proposition}
\newcommand{\Ocal}{\mathcal{O}}
\newcommand{\de}{\,\mathrm{d}} 
\newtcolorbox{block}[1]{colback=blue!5!white,colframe=blue!75!black,
fonttitle=\bfseries,title=#1}
\title{Poisson Energy Formulation for Floorplanning: Variational Analysis and Mathematical Foundations}
\author{Wenxing Zhu, Hao Ai \\
Center for Discrete Mathematics and Theoretical Computer Science,\\
Fuzhou University, Fuzhou 350116, China\\
\texttt{wxzhu@fzu.edu.cn}}
\date{}
\begin{document}

\maketitle

\begin{abstract}
Arranging many modules within a bounded domain without overlap, central to the Electronic Design Automation (EDA) of very large-scale integrated (VLSI) circuits, represents a broad class of discrete geometric optimization problems with physical constraints. This paper develops a variational and spectral framework for Poisson energy–based floorplanning and placement in physical design. We show that the Poisson energy, defined via a Neumann Poisson equation,
is exactly the squared $H^{-1}$ Sobolev norm of the density residual, providing a functional-analytic interpretation of the classical electrostatic analogy. Through spectral analysis, we demonstrate that the energy acts as an intrinsic
low-pass filter, suppressing high-frequency fluctuations while enforcing large-scale uniformity. Under a mild low-frequency dominance assumption, we establish a quantitative
linear lower bound relating the Poisson energy to the geometric overlap area, thereby justifying its use as a smooth surrogate for the hard non-overlap constraint.
We further show that projected gradient descent converges globally to stationary points and exhibits local linear convergence near regular minima. Finally, we interpret the continuous-time dynamics as a Wasserstein-2 gradient flow,
revealing the intrinsic nonlocality and global balancing behavior of the model. These results provide a mathematically principled foundation for PDE-regularized optimization in large-scale floorplanning and related geometric layout problems.
\end{abstract}

\noindent\textbf{Keywords:}
Poisson energy; PDE-based optimization; variational methods; combinatorial optimization; spectral analysis; geometric constraints.

\medskip
\noindent\textbf{MSC2020:}
35J05, 49M25, 49J20, 65K10, 90C26.

\section{Introduction}

In the Electronic Design Automation (EDA) of Very Large Scale Integrated (VLSI) circuits, fixed-outline floorplanning and placement are crucial steps in physical design and can be viewed as canonical geometric optimization problems
involving strong nonconvexity and combinatorial complexity
\cite{Adya2003FixedOutline,Kahng2011VLSI}. Basically, both tasks require arranging a set of modules or cells within a prescribed rectangular domain such that they do not overlap, while minimizing a global objective, typically the total interconnect wire length. From a computational complexity perspective, these problems are strongly NP-hard, as they generalize the classical two-dimensional bin-packing problem \cite{GareyJohnson}. Furthermore, they are inherently large-scale, ranging from thousands of modules in floorplanning to millions of cells in placement, making them a central and enduring challenge in modern chip design. The mathematical structure of such problems, involving geometric constraints and large-scale nonconvexity, motivates their analysis from a variational perspective.

A common analytical paradigm for handling the non-overlap constraints is the continuous relaxation framework, which replaces discrete feasibility conditions by smooth, density-based penalty functionals 
\cite{4544855,Markov2015Placement,7018002}. This relaxation transforms the combinatorial problem into a differentiable
optimization problem, enabling the use of gradient-based methods. A particularly successful line of work derives such density penalties from electrostatic analogies: each module is modeled as a positive charge, the placement density represents a charge distribution, and overlaps generate a potential energy analogous to electrostatic repulsion
\cite{Lu2014DAC,Lu2015TCAD}. Formally, the potential $\phi$ satisfies a Poisson equation with Neumann boundary conditions:
\[
\left\{
\begin{aligned}
 -\Delta \phi(x) &= \rho(x), \quad x\in R,\\
 \nabla \phi \cdot \mathbf{n} &= 0, \quad x\in \partial R,\\
 \int_R \phi(x)\,\mathrm{d}x &= \int_R \rho(x)\,\mathrm{d}x = 0,
\end{aligned}
\right.
\]
where $\rho(x)$ denotes the spatial density distribution.
This PDE-based relaxation paradigm has become dominant in modern VLSI placement research
\cite{agnesina2023autodmp,gu2020dreamplace,Li2025DAC,liao2023dreamplace,
Lin2019DAC,lin2020dreamplace,xplace_tcad,lu2016eplace3d,li2021elfplace,Zhu2018ICCAD},
enabling efficient enforcement of density constraints through the electrostatic analogy. While the computational effectiveness of these methods has been well demonstrated, their mathematical foundations and variational interpretation
remain unexplored.

The Poisson equation–based formulation was recently extended to large-scale fixed-outline floorplanning by the Poisson energy framework (PeF) \cite{Li2023PeF},
which defines an explicit spatial density $\rho(x)$ and the corresponding electrostatic self-energy as
\(
\sum_i \int_{v_i} \phi(u)\,\mathrm{d}u.
\)
This serves as a smooth surrogate for the geometric non-overlap constraint. Subsequent extensions incorporate module orientation and aspect ratio \cite{Huang2023ICCAD} and address mixed-size global placement \cite{Peng2024TCAD}.
Beyond this specific context, the problem of arranging geometric objects without overlap within a bounded domain constitutes a broad class of nonconvex geometric optimization problems. The mathematical formulation and analysis developed in this paper may thus offer insights applicable to other geometric optimization settings of similar structure.

Despite the remarkable empirical success of Poisson-based methods, their mathematical foundations remain largely unexplored. This work is motivated by several fundamental questions: (i) What is the precise mathematical correspondence between the Poisson energy and the geometric non-overlap constraint?  (ii) Under what regularity or scaling conditions does the smoothed functional converges? (iii) What analytical properties of the underlying elliptic operator explain the superior global balancing behavior observed in practice?

In this paper, we develop a systematic mathematical framework for the Poisson energy formulation of fixed-outline floorplanning and placement. Our formulation departs from existing Poisson-based models
\cite{Huang2023ICCAD,Li2023PeF,Peng2024TCAD}
both in analytical structure and in variational interpretation.
By combining tools from spectral analysis, the calculus of variations,
and optimal transport, we provide formal answers to the questions above
and establish a provable theoretical foundation for PDE-regularized optimization.
The main contributions are summarized as follows:

\begin{itemize}[leftmargin=*]
    \item \textbf{Spectral characterization and low-frequency filtering.}
    We prove that the Poisson energy is the spectrally weighted
    $H^{-1}$ seminorm of the density residual.
    This reveals that the inverse Laplacian acts as an intrinsic low-pass filter
    on the density variance ($L^2$ norm),
    explaining the model’s ability to resolve large-scale (low-frequency)
    imbalances in density.

    \item \textbf{Quantitative overlap guarantees.}
    Under a mild low-frequency dominance assumption,
    we derive a uniform linear lower bound for the Poisson energy
    in terms of the geometric overlap area.
    This result provides a rigorous justification for using the Poisson energy
    as a smooth surrogate for the hard non-overlap constraint.

    \item \textbf{Convergence and stability of optimization dynamics.}
    We show that projected gradient descent on the nonconvex PeF objective
    converges globally to stationary points,
    exhibits local linear convergence near regular minima,
    and defines a Lipschitz-stable solution map.

    \item \textbf{Connection to optimal transport.}
    We establish that the continuous-time dynamics of the model
    constitute a Wasserstein-2 gradient flow of the Poisson energy.
    This connection provides a variational explanation
    for the method’s nonlocal behavior and its empirical global balancing effect.
\end{itemize}

The remainder of the paper is organized as follows.
Section~\ref{sec:formulation} introduces the mathematical formulation
and spectral characterization of the Poisson energy.
Section~\ref{sec:pef_convergence} establishes quantitative overlap bounds.
Section~\ref{sec:pgd_mollified} presents the optimization algorithm
and convergence analysis.
Section~\ref{sec:nonlocality} interprets the continuous dynamics
as a Wasserstein gradient flow.
Section~\ref{sec:conclusion} concludes the paper.

\section{Mathematical Foundations of Poisson Energy in PeF}
\label{sec:formulation}

Let $R \subset \mathbb{R}^2$ be a bounded Lipschitz domain.
We consider the residual density
$r \in L^2_0(R) := \{\, f \in L^2(R)\;|\;\int_R f\,\mathrm{d}x = 0 \,\}$,
and define the potential $\phi \in H^1(R)/\mathbb{R}$ as the unique weak solution of the Neumann problem
\[
\int_R \nabla \phi \cdot \nabla v\,\mathrm{d}x = \int_R r\,v\,\mathrm{d}x,
\qquad \forall v \in H^1(R)/\mathbb{R}.
\]
By the Lax--Milgram theorem, this problem is well-posed. The associated solution operator $G := (-\Delta_N)^{-1}$, known as the Green's operator, defines a compact, self-adjoint, and positive definite map $G : L^2_0(R) \to H^2(R) \cap L^2_0(R)$. Its operator norm is given by the reciprocal of the first non-zero Neumann eigenvalue, $\|G\|_{L^2 \to L^2} = 1/\lambda_1$.
Consequently, the Poisson energy admits the functional representation
\[
E(c) = \tfrac{1}{2}\langle r, G r \rangle_{L^2(R)}
      = \tfrac{1}{2}\int_R |\nabla \phi|^2\,\mathrm{d}x.
\]

Building on this variational setting, this section presents the mathematical formulation of the Poisson energy based floorplanning (PeF) method~\cite{Li2023PeF} and establishes its key theoretical properties. We define the continuous optimization problem, introduce the Poisson energy functional as a smooth surrogate for the non-overlap constraint, and provide a spectral analysis that justifies its effectiveness.

\subsection{The PeF Optimization Problem}

Basically, the fixed-outline floorplanning or placement problem seeks to arrange a set of hard rectangular modules $\{M_i\}_{i=1}^n$ inside a rectangular domain $R\subset \mathbb{R}^2$ to minimize the total interconnect wirelength while ensuring they do not overlap. This problem is inherently large-scale and nonconvex, combining geometric and combinatorial complexity.

The PeF methodology \cite{Li2023PeF} addresses this challenge by casting it as a continuous optimization problem. The objective is to minimize a composite function that combines a smooth wirelength term with a Poisson-based energy penalty:
\begin{equation}
\label{eq:objective}
\min_{c} F(c) = W(c) + \lambda E(c),
\qquad \lambda \ge 0,
\end{equation}
where $c = (c_1, \dots, c_n) \in \mathbb{R}^{2n}$ is the vector of all module center coordinates, $W(c)$ is the total wirelength, and $E(c)$ is the Poisson energy that penalizes non-uniform density. 


Note that our penalty term $\lambda E(c)$ is a simple rescaling of the original penalty term used in \cite{Li2023PeF}, as the two are related by a factor of two. We use $E(c)$ directly as it forms the basis of our subsequent theoretical analysis.

To enable gradient-based optimization, the non-differentiable half-perimeter wirelength (HPWL) is replaced by a smooth approximation. The wirelength of each net $\mathcal{N}$ is typically modeled as:
\begin{equation*}
W_{\mathcal{N}} = \Phi\Big(\max_{i\in \mathcal{N}} (c_i \cdot e_1) - \min_{i\in \mathcal{N}} (c_i \cdot e_1)\Big) + \Phi\Big(\max_{i\in \mathcal{N}} (c_i \cdot e_2) - \min_{i\in \mathcal{N}} (c_i \cdot e_2)\Big),
\end{equation*}
where $e_1$ and $e_2$ are the standard basis vectors in $\mathbb{R}^2$, and $\Phi$ is a smooth approximation of the identity function, such as the log-sum-exp \cite{Naylor2001} or weighted-average models \cite{Hsu2013TSV}. The total wirelength, $W(c) = \sum_{\mathcal{N}} W_{\mathcal{N}}$, is thus a smooth function of the coordinates $c$. This smoothing ensures differentiability of $W(c)$ and allows efficient gradient-based optimization.

\subsection{The Poisson Energy  Functional}

To enforce the non-overlap constraint in a differentiable manner, PeF \cite{Li2023PeF} introduces a penalty based on the deviation of the module density from a uniform distribution across the placement region. 
Specifically, for each module $M_i$, define its indicator density:
\[
\rho_i(x) =
\begin{cases}
1, & x\in M_i, \\
0, & \text{otherwise}.
\end{cases}
\]
The aggregate density and the average density are then given by
\begin{equation*}
\rho(x; c) = \sum_{i=1}^n \rho_i(x),
\qquad 
\bar\rho = \frac{1}{|R|}\sum_{i=1}^n A_i,    
\end{equation*}
and the density residual is defined as
\begin{equation}
\label{eq_residual}
r(x; c) = \rho(x; c) - \bar\rho. 
\end{equation}
This residual measures the local deviation from the ideal uniform density, which serves as the foundation for defining the Poisson energy.

For each module $M_i$, letting $M_i \subset R$ also denote its occupied region, PeF \cite{Li2023PeF} defines the module-level potential energy as
\[
N_i = \int_{M_i} \phi(u)\, \de u,
\]
where $\phi(u)$ is the solution to the Poisson equation with Neumann boundary conditions:
\begin{equation}
\label{eq_poisson}
\left\{
\begin{aligned}
 -\Delta \phi(x)  &= r(x; c), & x\in R, \\
 \nabla \phi \cdot \mathbf{n}  &= 0, & x\in \partial R, \\
 \int_R \phi(x)\, \de x  &= 0,
\end{aligned}
\right.
\end{equation}
and $r(x;c)$ is the residual density defined in \eqref{eq_residual}. The total density penalty in Equation \eqref{eq:objective} is then given by
\[
D(c) = \sum_{i=1}^n N_i.
\]

\begin{definition}
\label{def_energy}
The \emph{Poisson energy} of a floorplan $c$ is defined as the quadratic form
\[
E(c) = \frac{1}{2}\int_R r(x; c)\,\phi(x)\, \de x,
\]
where $\phi$ is the solution of the Poisson equation \eqref{eq_poisson}.
\end{definition}

The Poisson energy $E(c)$ and the module-level penalty $D(c)$ are related as follows:


\begin{proposition}
\label{thm:density_energy}
Given a floorplan $c$, the Poisson energy $E(c)$ equals one half of the sum of the module-level potential energies:
\[
E(c) = \frac 12 D(c) = \frac{1}{2}\sum_{i=1}^n N_i.
\]
\end{proposition}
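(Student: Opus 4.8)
The plan is to prove the identity by a direct computation: I substitute the definition of the residual density into the quadratic form defining $E(c)$, then use linearity of the integral together with the zero-mean gauge condition on $\phi$.

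First I would expand the residual via its definition \eqref{eq_residual}, writing $r(x;c) = \sum_{i=1}^n \rho_i(x) - \bar\rho$, so that
\[
\int_R r(x;c)\,\phi(x)\,\de x
= \sum_{i=1}^n \int_R \rho_i(x)\,\phi(x)\,\de x \;-\; \bar\rho\int_R \phi(x)\,\de x.
\]
Since each $\rho_i$ is the indicator function of the module region $M_i$, the $i$-th term in the sum collapses to $\int_{M_i}\phi(u)\,\de u = N_i$, so that the first sum is exactly $D(c) = \sum_{i=1}^n N_i$.

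The only step that requires attention is the second term. The factor $\int_R \phi\,\de x$ vanishes identically by the normalization condition $\int_R \phi\,\de x = 0$ imposed in the Neumann problem \eqref{eq_poisson}. This is not incidental: the gauge condition is precisely what fixes the otherwise free additive constant in the potential $\phi$ and thereby makes the energy well-defined, so I would emphasize that the identity hinges on it. Combining the two observations gives $\int_R r\,\phi\,\de x = D(c)$, and multiplying by $\tfrac{1}{2}$ yields $E(c) = \tfrac{1}{2}D(c) = \tfrac{1}{2}\sum_{i=1}^n N_i$, as claimed.

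Since no regularity beyond $\phi \in L^2(R)$ and boundedness of the module regions $M_i$ is invoked, the argument is purely algebraic and presents no genuine obstacle; the proof amounts to bookkeeping the expansion of $r$ and observing that the constant-mean contribution is annihilated by the gauge. I would simply remark that the same computation, read through the weak formulation $\int_R \nabla\phi\cdot\nabla v\,\de x = \int_R r\,v\,\de x$ with $v=\phi$, reproduces the equivalent Dirichlet form $E(c) = \tfrac{1}{2}\int_R |\nabla\phi|^2\,\de x$ established in the variational setting above, confirming consistency of the two representations of the energy.
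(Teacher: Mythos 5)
Your proof is correct and follows essentially the same route as the paper's: expand $r=\sum_i\rho_i-\bar\rho$, kill the constant term via the gauge condition $\int_R\phi\,\de x=0$, and collapse each indicator integral to $N_i$. Your added remarks on the role of the gauge and the consistency with the Dirichlet form are sound but not needed for the argument.
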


\begin{proof}
From Definition \ref{def_energy},
\[
E(c) = \frac{1}{2}\int_R (\rho(x; c) - \bar\rho)\phi(x)\, \de x.
\]
The second term vanishes since $\bar\rho$ is constant and 
$\int_R \phi(x)\,\de x = 0$. Hence
\[
E(c) = \frac{1}{2}\int_R \rho(x; c)\phi(x)\, \de x
= \frac{1}{2}\sum_{i=1}^n \int_R \rho_i(x)\phi(x)\, \de x.
\]
As $\rho_i$ is the indicator function of $M_i$, this integral reduces to
$\tfrac{1}{2}\sum_{i=1}^n N_i$, completing the proof.
\end{proof}

Proposition~\ref{thm:density_energy} reveals a fundamental structural property of the Poisson energy: the global density energy decomposes exactly into one half of the sum of module-level potential energies. This result establishes a mathematical link between global and local quantities and serves as the foundation for the theoretical analysis of PeF \cite{Li2023PeF} in the subsequent sections.

\begin{lemma}[Equivalence of Energy Representations]
\label{lemma_equiv}
Let the density fluctuation be denoted by $f(x) := \rho(x;c) - \bar\rho$. Let the potential $\phi(x)$ be the unique, zero-mean solution to the Neumann boundary value problem \eqref{eq_poisson}. 
 Let $G(x,y)$ be the Neumann Green's function corresponding to the domain $R$. Then, the energy $E(c)$  has the equivalent integral representation:
\[
E(c) = \frac{1}{2} \int_R \int_R G(x,y) f(x) f(y) \, \de x \de y.
\]
\end{lemma}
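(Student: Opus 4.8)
The plan is to start from Definition~\ref{def_energy}, observe that the density residual $r(x;c)$ coincides with the fluctuation $f(x)$, and reduce the energy to the bilinear pairing $E(c) = \tfrac12 \int_R f(x)\,\phi(x)\,\de x$. The entire task then amounts to expressing the potential $\phi$ through the Neumann Green's function $G$ and substituting.

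First I would recall the defining properties of the Neumann Green's function $G(x,y)$ for the domain $R$: for each fixed $y$ it satisfies $-\lap_x G(x,y) = \delta(x-y) - 1/|R|$ with homogeneous Neumann data $\grad_x G \cdot \mathbf{n} = 0$ on $\partial R$ and the normalization $\int_R G(x,y)\,\de x = 0$. The subtracted constant $1/|R|$ is precisely the projection enforcing the Neumann compatibility condition. I would then propose the candidate $\phi(x) = \int_R G(x,y)\, f(y)\,\de y$ and verify that it is the unique zero-mean weak solution of \eqref{eq_poisson}. Applying $-\lap_x$ under the integral sign gives $-\lap_x\phi(x) = f(x) - \tfrac{1}{|R|}\int_R f(y)\,\de y$; since $f$ has zero mean (because $\int_R \rho(x;c)\,\de x = \sum_i A_i = \bar\rho\,|R|$), the corrective term vanishes and $-\lap_x\phi = f$. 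The zero-mean normalization of $\phi$ follows from $\int_R G(x,y)\,\de x = 0$, and the Neumann condition is inherited from $G$. By the uniqueness already established via Lax--Milgram, this candidate coincides with the $\phi$ in the statement.

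Substituting this representation and interchanging the order of integration yields
\[
E(c) = \frac12 \int_R f(x)\Big(\int_R G(x,y)\, f(y)\,\de y\Big)\de x
      = \frac12 \int_R\!\int_R G(x,y)\, f(x)\, f(y)\,\de x\,\de y,
\]
which is the claimed identity.

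The main technical point, rather than a genuine obstacle, is justifying the Green's function representation and the interchange of integrals rigorously. The representation $\phi = G f$ hinges on the zero-mean property of $f$ to annihilate the compatibility-correction term, so this must be invoked explicitly; and Fubini's theorem requires integrability of $G(x,y)\, f(x)\, f(y)$, which is guaranteed since in two dimensions the Neumann Green's function has at most a logarithmic singularity and therefore lies in $L^2(R\times R)$, while $f \in L^2_0(R)$. With these observations in place the computation is entirely routine.
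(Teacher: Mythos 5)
Your proof is correct and follows essentially the same route as the paper's: represent $\phi$ via the Neumann Green's function as $\phi(x)=\int_R G(x,y)f(y)\,\de y$ and then interchange the order of integration by Fubini's theorem. If anything you are more careful than the paper, which simply asserts this representation, whereas you verify it explicitly---including the $-1/|R|$ compatibility term in $-\lap_x G(x,y)=\delta(x-y)-1/|R|$, which the zero mean of $f$ annihilates---and you justify Fubini via the logarithmic integrability of the two-dimensional Green's function together with $f\in L^2_0(R)$.
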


\begin{proof}
The proof relies on representing the solution of the Poisson equation using the Neumann Green's function. 

Let $f(x) = \rho(x;c) - \bar\rho$. The Poisson equation with Neumann boundary conditions reads
\[
-\Delta \phi(x; c) = f(x), \quad x \in R,
\]
with $\nabla \phi \cdot \mathbf{n} = 0$ on $\partial R$ and $\int_R \phi = 0$. Since $-\Delta$ with Neumann boundary conditions is a linear operator, its inverse can be represented via the Neumann Green's function $G(x,y)$:
\[
\phi(x; c) = \int_R G(x,y) f(y)\, \de y.
\]

By the definition of the Poisson energy,
\[
E(c) = \frac{1}{2} \int_R \phi(x) f(x) \, \de x,
\]
substituting the integral representation of $\phi(x)$ gives
\[
E(c) = \frac{1}{2} \int_R \left[ \int_R G(x,y) f(y) \, \de y \right] f(x) \, \de x.
\]
Interchanging the order of integration (justified by Fubini's theorem) and rearranging the terms yields
\[
E(c) = \frac{1}{2} \int_R \int_R G(x,y) f(x) f(y) \, \de x \de y,
\]
which is exactly the integral representation claimed in the lemma.
\end{proof}

\subsection{Spectral Analysis of Poisson Energy and Density Variance}

To theoretically analyze the Poisson energy \eqref{def_energy}  in PeF~\cite{Li2023PeF}, we introduce the notion of density variance, which measures the deviation of the spatial density from its mean. Spectral analysis reveals that $E(c)$ acts as a smooth, low-frequency–dominated surrogate of this variance, providing a functional-analytic interpretation of its behavior.

\begin{definition}
\label{def:variance}
The variance of a floorplan $c$ is defined as
\begin{equation*}
\text{Var}(\rho) := \int_R (\rho(x;c) - \bar\rho)^2 \de x = \int_R r^2(x; c)  \de x.
\end{equation*}
\end{definition}

This quantity measures the deviation of the density from uniformity. The next theorem shows that the Poisson energy $E(c)$ acts as a smooth proxy for this variance.

\begin{theorem}[Poisson Energy as  Smooth Quadratic Proxy of Variance]
\label{thm:smoothproxy}
The Poisson energy of a floorplan configuration $c$ satisfies
\[
E(c) := \frac 12 \int_R \phi(x) r(x;c) \, dx = \frac 12 \langle r, G r \rangle_{L^2(R)},
\]
where $G = (-\Delta_N)^{-1}$ is the Neumann Green operator.
Moreover, its first variation with respect to the density $r$ satisfies $\delta E/\delta r = \phi = G r$, meaning that the gradient field used for optimization is a spatially smoothed version of $r$ obtained by convolution with the Poisson kernel.
\end{theorem}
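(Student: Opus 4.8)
The plan is to treat both assertions as essentially immediate consequences of the operator-theoretic setup already recorded at the start of Section~\ref{sec:formulation}, so the work reduces to interpreting the quadratic functional $E$ correctly on the mean-zero space $L^2_0(R)$. For the first identity I would simply invoke the definition of the Green operator $G = (-\Delta_N)^{-1}$: since $\phi$ is, by construction, the unique zero-mean weak solution of the Neumann problem \eqref{eq_poisson} with right-hand side $r$, we have $\phi = G r$. Substituting this into $E(c) = \tfrac{1}{2}\int_R \phi\, r\,\de x$ and reading the integral as the $L^2(R)$ pairing gives $E(c) = \tfrac{1}{2}\langle r, \phi\rangle_{L^2} = \tfrac{1}{2}\langle r, Gr\rangle_{L^2}$ directly. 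No further estimate is needed here, since well-posedness and the mapping properties of $G$ follow from Lax--Milgram as already stated.

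For the first variation I would compute the Gâteaux derivative of the quadratic form along an admissible direction $h \in L^2_0(R)$. Expanding
\[
E(r + \varepsilon h) = E(r) + \varepsilon\, \tfrac{1}{2}\bigl(\langle h, Gr\rangle + \langle r, Gh\rangle\bigr) + O(\varepsilon^2),
\]
the linear term isolates $\delta E$. The decisive step is the self-adjointness of $G$ on $L^2_0(R)$, which is asserted in the excerpt: it collapses the two cross terms into a single contribution via $\langle r, Gh\rangle = \langle Gr, h\rangle$, whence
\[
\left.\frac{d}{d\varepsilon}\right|_{\varepsilon = 0} E(r + \varepsilon h) = \langle h, Gr\rangle_{L^2} = \langle h, \phi\rangle_{L^2}.
\]
By the definition of the functional derivative this yields $\delta E/\delta r = Gr = \phi$.

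The one point requiring genuine care --- and what I expect to be the main (if modest) obstacle --- is the handling of the mean-zero constraint. Admissible perturbations $h$ must lie in $L^2_0(R)$ for $r + \varepsilon h$ to remain a legitimate density residual, so the gradient is a priori determined only modulo additive constants: adding a constant to $\phi$ changes the pairing by $\langle h, \mathrm{const}\rangle = \mathrm{const}\cdot\int_R h\,\de x = 0$. The normalization $\int_R \phi\,\de x = 0$ built into \eqref{eq_poisson} selects the canonical representative and removes this ambiguity, so that $\phi$ itself is the gradient. Finally, to justify the concluding interpretive remark I would invoke Lemma~\ref{lemma_equiv}: the identity $\phi(x) = Gr(x) = \int_R G(x,y)\, r(y)\,\de y$ exhibits the descent direction as an integral smoothing of $r$ against the Neumann Green's function, and the elliptic-regularity mapping $G : L^2_0 \to H^2 \cap L^2_0$ stated in the excerpt confirms that this operation genuinely raises regularity, so the gradient field is a spatially smoothed version of the raw residual $r$.
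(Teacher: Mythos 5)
Your proposal is correct and takes essentially the same operator-theoretic route as the paper: both identify $\phi = Gr$ via the Green operator and read the energy as the quadratic form $E(c) = \tfrac12\langle r, Gr\rangle_{L^2(R)}$ on the mean-zero space. If anything, yours is more complete — the paper simply asserts $\delta E/\delta r = \phi = Gr$, whereas you supply the Gâteaux expansion, invoke self-adjointness of $G$ to collapse the cross terms, and correctly observe that against perturbations $h \in L^2_0(R)$ the gradient is determined only modulo constants, with the normalization $\int_R \phi\,\de x = 0$ selecting the canonical representative (your phrasing ``smoothing against the Neumann Green's function'' is also more accurate than the theorem's ``convolution,'' since $G$ is not translation-invariant on a bounded domain).
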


\begin{proof}
The Poisson equation with Neumann BC \eqref{eq_poisson} can be written in operator form as
\[
\phi = G r, \quad G = (-\Delta_N)^{-1}, \quad r \in L^2_0(R),
\]
where $G$ is self-adjoint and positive definite on zero-mean functions. By Lemma \ref{lemma_equiv},
the Poisson energy is
\[
E(c) = \frac 12 \langle r, G r \rangle_{L^2(R)}.
\]
Since $G$ is strictly positive definite, $E(c)$ is a strictly convex quadratic form in $r$.

The functional derivative of $E(c)$ with respect to density position $r$ is
\[
\frac{\delta E}{\delta r} = \phi = G r,
\]
i.e., the gradient field used for layout optimization is the smoothed version of $r$, corresponding to convolution with the Poisson kernel. 
\end{proof}

To further quantify how the Poisson energy weights different spatial frequencies of density fluctuations, we analyze its spectral decomposition in the eigenbasis of the Neumann Laplacian.

\begin{theorem}[Spectral Representation of the Poisson Energy and Variance]
\label{thm:spectral_representation}

Consider the negative Laplacian operator with Neumann boundary conditions, $-\Delta_N$, on a finite rectangular domain $R$. Let $\{\lambda_k, \psi_k(x)\}_{k=0}^\infty$ be its eigenpairs, where $0=\lambda_0 < \lambda_1 \le \lambda_2 \le \dots$, and $\{\psi_k\}$ form a complete orthonormal basis of $L^2(R)$. 

For any residual density $r(x;c) \in L^2_0(R)$ (the subspace of zero-mean functions), we have the expansion
\[
r(x;c) = \sum_{k=1}^\infty \alpha_k \psi_k(x), 
\quad \alpha_k = \int_R r(x;c)\psi_k(x)\, \de x,
\]
where the summation starts from $k=1$ because $r$ is orthogonal to the constant eigenfunction $\psi_0$. Then:
\begin{enumerate}
    \item \textbf{Variance:}
    \[
    \mathrm{Var}(\rho) = \sum_{k=1}^\infty \alpha_k^2,
    \]
     which represents the total spectral energy of the residual density over all non-constant modes.
    
    \item \textbf{Poisson Energy:}
    \[
    E(c) = \frac{1}{2} \sum_{k=1}^\infty \frac{\alpha_k^2}{\lambda_k},
    \]
    where the factor $1/\lambda_k$ downweights high-frequency modes (large $\lambda_k$) and emphasizes low-frequency components.
\end{enumerate}
\end{theorem}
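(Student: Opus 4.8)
The plan is to diagonalize both quantities in the Neumann eigenbasis, reducing each claim to Parseval's identity; the only genuine analytic content is the spectral action of the Green operator $G$ on the expansion of $r$, with the rest being bookkeeping with orthonormality.

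First I would dispatch the variance claim, which is immediate. Since $\{\psi_k\}_{k\ge 0}$ is a complete orthonormal basis of $L^2(R)$ and $r \in L^2_0(R)$ satisfies $\ip{r}{\psi_0} = 0$ (the eigenfunction $\psi_0$ is the constant, and $r$ has zero mean), Parseval's identity yields
\[
\mathrm{Var}(\rho) = \norm{r}_{L^2(R)}^2 = \sum_{k\ge 1}\alpha_k^2,
\]
the $k=0$ term being absent precisely because $\alpha_0 = \ip{r}{\psi_0} = 0$.

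Second, the key step is to show that $G$ acts diagonally, $G\psi_k = \lambda_k^{-1}\psi_k$ for each $k\ge 1$. This follows from the eigenrelation $-\Delta_N \psi_k = \lambda_k \psi_k$ together with $G = (-\Delta_N)^{-1}$: each $\psi_k$ already satisfies the homogeneous Neumann condition and, for $k\ge 1$, has zero mean, so it lies in the subspace on which $G$ inverts $-\Delta_N$. Applying $G$ termwise to the expansion of $r$ produces the candidate potential $\phi = Gr = \sum_{k\ge 1}\lambda_k^{-1}\alpha_k\psi_k$. Substituting into the functional form $E(c)=\tfrac12\ip{r}{Gr}_{L^2(R)}$ of Theorem~\ref{thm:smoothproxy} and invoking orthonormality collapses the resulting double sum to $E(c)=\tfrac12\sum_{k\ge 1}\lambda_k^{-1}\alpha_k^2$.

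The main obstacle is justifying the termwise application of $G$ and the convergence of the associated series, that is, confirming that $\phi=\sum_{k\ge 1}\lambda_k^{-1}\alpha_k\psi_k$ is genuinely the zero-mean weak solution of the Neumann problem rather than a purely formal manipulation. I would resolve this by appealing to the boundedness of $G$ recorded in the preamble, $\norm{G}_{L^2\to L^2}=1/\lambda_1$: since $\lambda_k\ge\lambda_1>0$, one has $\sum_{k\ge 1}\lambda_k^{-2}\alpha_k^2 \le \lambda_1^{-2}\sum_{k\ge 1}\alpha_k^2 < \infty$, so the partial sums defining $\phi$ converge in $L^2(R)$, and the continuity of $G$ lets the limit pass through the series. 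The energy series converges by the same estimate, $\sum_{k\ge 1}\lambda_k^{-1}\alpha_k^2\le\lambda_1^{-1}\mathrm{Var}(\rho)<\infty$. Finally, because the expansion of $\phi$ omits the constant mode $\psi_0$, the normalization $\int_R\phi\,\de x=0$ holds automatically, so this series is indeed the unique solution singled out by the Neumann problem, completing the argument.
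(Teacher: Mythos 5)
Your proof is correct and takes essentially the same route as the paper's: expand $r$ in the Neumann eigenbasis, obtain the variance identity from Parseval, and compute $E(c)=\tfrac12\ip{r}{Gr}_{L^2(R)}$ via the diagonal action $G\psi_k=\lambda_k^{-1}\psi_k$ with orthonormality collapsing the double sum. Your justification of the termwise application of $G$ (via $\sum_{k\ge 1}\lambda_k^{-2}\alpha_k^2\le\lambda_1^{-2}\,\mathrm{Var}(\rho)<\infty$ and continuity of $G$) is if anything slightly more explicit than the paper's, which simply asserts $L^2$ convergence of the spectral series from the bound $\norm{G}_{L^2\to L^2}=1/\lambda_1$; the paper's proof additionally records the integration-by-parts identity $E(c)=\tfrac12\int_R|\nabla\phi|^2\,\de x\ge 0$ for later use, but that is not part of the stated theorem.
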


\begin{proof}
By definition,
\[
\mathrm{Var}(\rho) = \|r\|_{L^2(R)}^2 = \int_R r(x;c)^2 \, \de x.
\]
Expanding $r$ in the eigenbasis $\{\psi_k\}_{k=1}^\infty$ and applying Parseval's identity yields
\[
\|r\|_{L^2(R)}^2 = \sum_{k=1}^\infty \alpha_k^2,
\]
which proves the variance representation.

By Theorem \ref{thm:smoothproxy}, the Poisson energy can be written as
\[
E(c) = \tfrac{1}{2}\langle r, G r \rangle, \quad G = (-\Delta_N)^{-1}.
\]
On the zero-mean subspace $L^2_0(R)$, the inverse Neumann Laplacian
$G := (-\Delta_N)^{-1}$ is a self-adjoint, positive definite, and bounded operator,
with operator norm $\|G\|_{L^2\to L^2} = 1/\lambda_1$.
Hence $Gr\in L^2_0(R)$ for every $r\in L^2_0(R)$,
and the spectral series
\[
Gr=\sum_{k=1}^\infty \frac{\alpha_k}{\lambda_k}\psi_k
\]
converges in the $L^2$ sense. Consequently, the inner products and termwise summations
used below are mathematically justified.

For each eigenfunction $\psi_k$ with $k \ge 1$, we have 
\[
G \psi_k = \tfrac{1}{\lambda_k}\psi_k.
\]
Applying $G$ to the expansion of $r$ gives
\[
Gr = \sum_{k=1}^\infty \tfrac{\alpha_k}{\lambda_k}\psi_k.
\]
Thus
\[
E(c) = \tfrac{1}{2} \Big\langle \sum_{i=1}^\infty \alpha_i \psi_i,\,
\sum_{j=1}^\infty \tfrac{\alpha_j}{\lambda_j}\psi_j \Big\rangle
= \tfrac{1}{2} \sum_{k=1}^\infty \frac{\alpha_k^2}{\lambda_k},
\]
where orthonormality of $\{\psi_k\}$ eliminates cross terms. This proves the second part. 

Moreover, by integration by parts (using the Neumann boundary condition $\partial_n\phi=0$),
we obtain the equivalent expression
\begin{equation}
\label{eq:ecnonnegative}
E(c)=\tfrac12\int_R r\phi\,\de x
   = \tfrac12\int_R |\nabla\phi|^2\,\de x \ge 0,    
\end{equation}
which also directly shows the non-negativity of the Poisson energy.
\end{proof}

Equation~\eqref{eq:ecnonnegative} also shows that $E(c)$ defines the squared $H^{-1}$ norm of $r$, i.e., $E(c)=\tfrac{1}{2}\|r\|_{H^{-1}}^2$. Moreover, the $1/\lambda_k$ weighting implies that $E(c)$ penalizes long-wavelength (low-frequency) deviations more strongly, thus enforcing large-scale uniformity while smoothing high-frequency noise.

\subsection{Corollaries and Discussions}

The spectral decomposition in Theorem~\ref{thm:spectral_representation} not only clarifies why the Poisson energy emphasizes low-frequency density deviations but also yields direct analytical insights into the global structure of the density field.
In particular, it shows that $E(c)$ is non-negative and vanishes exactly when all residual density fluctuations are zero, i.e., when the placement is perfectly uniform.

\begin{corollary}
\label{cor:zero_equivalence}
Under the assumptions of Theorem~\ref{thm:spectral_representation} (bounded rectangular domain $R$, Neumann Laplacian $-\Delta_N$, and $r(\cdot;c)\in L^2_0(R)$), $E(c)\ge 0$, and we have the equivalence:
\[
\mathrm{Var}(\rho)=0 \text{ if and only if } E(c)=0.
\]
\end{corollary}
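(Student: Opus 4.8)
The plan is to read off the result directly from the two spectral series established in Theorem~\ref{thm:spectral_representation}. Writing $\alpha_k = \int_R r(x;c)\psi_k(x)\,\de x$ for the eigencoefficients of $r(\cdot;c)\in L^2_0(R)$, that theorem gives
\[
\mathrm{Var}(\rho) = \sum_{k=1}^\infty \alpha_k^2,
\qquad
E(c) = \frac{1}{2}\sum_{k=1}^\infty \frac{\alpha_k^2}{\lambda_k}.
\]
First I would record non-negativity: in both series every summand is non-negative, since $\alpha_k^2 \ge 0$ and, crucially, $\lambda_k > 0$ for all $k \ge 1$ because $0 = \lambda_0 < \lambda_1 \le \lambda_2 \le \cdots$ (the only vanishing eigenvalue belongs to the constant mode $\psi_0$, to which $r$ is orthogonal). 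Hence $\mathrm{Var}(\rho)\ge 0$ and $E(c)\ge 0$.

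For the equivalence, the key observation is that the weights $1/\lambda_k$ are strictly positive and finite for every $k \ge 1$ (bounded above by $1/\lambda_1$). Consequently a sum of the form $\sum_k c_k \alpha_k^2$ with strictly positive coefficients $c_k$ vanishes if and only if $\alpha_k = 0$ for all $k$. Applying this to $c_k = 1$ gives $\mathrm{Var}(\rho) = 0 \iff \alpha_k = 0\ \forall k$, and applying it to $c_k = 1/(2\lambda_k)$ gives $E(c) = 0 \iff \alpha_k = 0\ \forall k$. Since both conditions reduce to the same statement $\{\alpha_k = 0 \text{ for all } k \ge 1\}$, the two are equivalent, which is the claimed iff. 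By completeness of the orthonormal eigenbasis $\{\psi_k\}$ and Parseval's identity, this common condition is further equivalent to $r(\cdot;c) = 0$ in $L^2(R)$, i.e.\ to perfectly uniform density $\rho \equiv \bar\rho$.

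As an alternative (and as a cross-check) I would note the Dirichlet-energy representation $E(c) = \tfrac12\int_R |\nabla\phi|^2\,\de x$ from \eqref{eq:ecnonnegative}: here $E(c)=0$ forces $\nabla\phi = 0$ a.e., so $\phi$ is constant, and the normalization $\int_R \phi\,\de x = 0$ then gives $\phi \equiv 0$; feeding this back into $-\Delta\phi = r$ yields $r = 0$, hence $\mathrm{Var}(\rho) = 0$, while the converse direction is immediate from $r=0 \Rightarrow \phi = 0 \Rightarrow E(c)=0$. I do not anticipate any genuine obstacle in this corollary; it is a direct consequence of the preceding spectral decomposition. The only point requiring a word of care is the strict positivity of $\lambda_k$ for $k \ge 1$, which guarantees that the low-pass weights $1/\lambda_k$ neither vanish nor blow up and therefore that the vanishing sets of the two quadratic forms coincide exactly.
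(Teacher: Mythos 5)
Your proposal is correct and follows essentially the same route as the paper: both read the equivalence off the spectral series $\mathrm{Var}(\rho)=\sum_k \alpha_k^2$ and $E(c)=\tfrac12\sum_k \alpha_k^2/\lambda_k$ from Theorem~\ref{thm:spectral_representation}, with the strict positivity of the $\lambda_k$ for $k\ge 1$ forcing both vanishing sets to coincide with $\{\alpha_k=0 \text{ for all } k\}$. The only cosmetic difference is that the paper takes $E(c)\ge 0$ from the Dirichlet-energy identity \eqref{eq:ecnonnegative} while you obtain it directly from the series (and offer the Dirichlet-energy argument as a cross-check), both of which are valid.
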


\begin{proof}
Recall the spectral representations from Theorem~\ref{thm:spectral_representation}:
\[
\mathrm{Var}(\rho)=\sum_{k=1}^\infty \alpha_k^2,
\qquad
E(c)=\frac{1}{2}\sum_{k=1}^\infty \frac{\alpha_k^2}{\lambda_k},
\]
where $\{\lambda_k\}_{k\ge1}$ are the positive eigenvalues of $-\Delta_N$ and $\alpha_k=\langle r,\psi_k\rangle_{L^2(R)}$.
$E(c)\ge 0$ follows from Equation \eqref{eq:ecnonnegative} in the proof of Theorem \ref{thm:spectral_representation}.

Next, if $\mathrm{Var}(\rho)=0$, then
\[
0=\sum_{k=1}^\infty \alpha_k^2.
\]
Since each term $\alpha_k^2\ge0$, it follows that $\alpha_k=0$ for every $k\ge1$. Hence the series defining $E(c)$ is identically zero:
\[
E(c)=\frac{1}{2}\sum_{k=1}^\infty \frac{\alpha_k^2}{\lambda_k}=0.
\]

Conversely, if $E(c)=0$, then
\[
0=\frac{1}{2}\sum_{k=1}^\infty \frac{\alpha_k^2}{\lambda_k}.
\]
Because each $\lambda_k>0$ and each $\alpha_k^2/\lambda_k\ge0$, every term must vanish: $\alpha_k^2/\lambda_k=0$, hence $\alpha_k=0$ for all $k\ge1$. Thus
\[
\mathrm{Var}(\rho)=\sum_{k=1}^\infty \alpha_k^2 = 0.
\]

Therefore $\mathrm{Var}(\rho)=0$ if and only if $E(c)=0$, as claimed.
\end{proof}

The spectral representation also allows us to bound the Poisson energy in terms of the variance. In particular, the following corollary establishes an explicit upper bound.

\begin{corollary}[Energy--Variance Upper Bound]
\label{cor:energy_variance}
Let $\lambda_1$ be the first nonzero eigenvalue of $-\Delta_N$ on $R$. Then the Poisson energy $E(c)$ and the variance $\mathrm{Var}(\rho)$ satisfy the inequality
\[
E(c) \;\le\; \frac{1}{2\lambda_1}\,\mathrm{Var}(\rho).
\]
\end{corollary}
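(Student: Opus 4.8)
The plan is to prove the bound by a direct termwise comparison of the two spectral series furnished by Theorem~\ref{thm:spectral_representation}. The essential structural fact I would exploit is the ordering of the Neumann eigenvalues, $0 = \lambda_0 < \lambda_1 \le \lambda_2 \le \cdots$, which guarantees that every nonzero eigenvalue is bounded below by the first one: $\lambda_k \ge \lambda_1$ for all $k \ge 1$. Inverting this inequality gives the uniform upper bound $1/\lambda_k \le 1/\lambda_1$ on each spectral weight appearing in the energy series, while the variance series carries weight $1$ on every mode. The disparity between these weights is precisely what yields the claimed factor $1/\lambda_1$.

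First I would recall the two expansions from Theorem~\ref{thm:spectral_representation},
\[
\mathrm{Var}(\rho) = \sum_{k=1}^\infty \alpha_k^2,
\qquad
E(c) = \frac{1}{2}\sum_{k=1}^\infty \frac{\alpha_k^2}{\lambda_k},
\]
along with the convergence of both series (already secured in the proof of that theorem, where $Gr \in L^2_0(R)$ ensures the energy series converges in $L^2$, and Parseval's identity makes $\mathrm{Var}(\rho)$ finite for $r \in L^2_0(R)$). Then I would bound the energy series termwise: since $\alpha_k^2 \ge 0$ and $1/\lambda_k \le 1/\lambda_1$ for every $k \ge 1$, each summand satisfies $\alpha_k^2/\lambda_k \le \alpha_k^2/\lambda_1$. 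Summing over $k$ and factoring out the constant $1/\lambda_1$ gives
\[
E(c) = \frac{1}{2}\sum_{k=1}^\infty \frac{\alpha_k^2}{\lambda_k}
\;\le\; \frac{1}{2\lambda_1}\sum_{k=1}^\infty \alpha_k^2
= \frac{1}{2\lambda_1}\,\mathrm{Var}(\rho),
\]
which is exactly the inequality asserted.

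I do not anticipate any genuine obstacle: the result is an elementary instance of monotone termwise domination of nonnegative series, and the only point meriting a word of care is the legitimacy of summing the termwise inequalities — immediate here because all terms are nonnegative and the majorizing series converges. Equivalently, the same bound follows from the operator-theoretic identity $\|G\|_{L^2 \to L^2} = 1/\lambda_1$ recorded in Section~\ref{sec:formulation}, via $E(c) = \tfrac{1}{2}\langle r, Gr\rangle \le \tfrac{1}{2}\|G\|_{L^2\to L^2}\,\|r\|_{L^2(R)}^2$. I would note this coordinate-free restatement for completeness but carry out the spectral argument as the primary proof, since it follows most transparently from the immediately preceding theorem and makes the role of $\lambda_1$ fully explicit.
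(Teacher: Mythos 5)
Your proposal is correct and follows essentially the same route as the paper: both invoke the spectral series of Theorem~\ref{thm:spectral_representation}, bound each weight via $1/\lambda_k \le 1/\lambda_1$ for $k \ge 1$, and sum the nonnegative termwise inequalities. Your closing remark on the equivalent operator-norm bound $\|G\|_{L^2\to L^2}=1/\lambda_1$ is a nice aside but not part of the paper's argument, which needs nothing beyond the termwise comparison.
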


\begin{proof}
From Theorem~\ref{thm:spectral_representation}, we have
\[
E(c) = \frac{1}{2}\sum_{k=1}^\infty \frac{\alpha_k^2}{\lambda_k}, 
\quad
\mathrm{Var}(\rho) = \sum_{k=1}^\infty \alpha_k^2.
\]
Since $\lambda_k \ge \lambda_1$ for all $k \ge 1$, it follows that
\[
\frac{1}{\lambda_k} \le \frac{1}{\lambda_1}, \quad \forall k \ge 1.
\]
Therefore,
\[
E(c) = \frac{1}{2}\sum_{k=1}^\infty \frac{\alpha_k^2}{\lambda_k}
\le \frac{1}{2\lambda_1}\sum_{k=1}^\infty \alpha_k^2
= \frac{1}{2\lambda_1}\,\mathrm{Var}(\rho).
\]
This proves the claim.
\end{proof}

Corollary~\ref{cor:energy_variance} shows that the Poisson energy is always upper-bounded by a scaled variance, with the scaling determined by the smallest nonzero eigenvalue. 
Conversely, Corollary~\ref{cor:lower_bound} introduces a complementary lower bound based on the first $N$ spectral modes. These results characterize how $E(c)$ is spectrally bounded by the variance, rather than tightly constraining it in all cases.

\begin{corollary}[Mode-Truncated Lower Bound]
\label{cor:lower_bound}
Let the notation and assumptions be as in Theorem~\ref{thm:spectral_representation}, and let $\lambda_1\le\lambda_2\le\cdots$ be the positive eigenvalues of $-\Delta_N$. 
Then for any integer $N\ge1$ the Poisson energy $E(c)$ satisfies the lower bound
\[
E(c) \;\ge\; \frac{1}{2\lambda_N}\sum_{k=1}^N \alpha_k^2,
\]
where $\alpha_k=\langle r,\psi_k\rangle_{L^2(R)}$ are the spectral coefficients of the residual density $r$.
\end{corollary}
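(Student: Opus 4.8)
The plan is to obtain the bound directly from the spectral series for the Poisson energy established in Theorem~\ref{thm:spectral_representation}, namely
\[
E(c) = \frac{1}{2} \sum_{k=1}^\infty \frac{\alpha_k^2}{\lambda_k}.
\]
Since every summand $\alpha_k^2/\lambda_k$ is nonnegative, the first step is to truncate the infinite series at index $N$, discarding the nonnegative tail $\sum_{k>N} \alpha_k^2/\lambda_k \ge 0$, which yields the intermediate inequality $E(c) \ge \tfrac{1}{2}\sum_{k=1}^N \alpha_k^2/\lambda_k$.

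The second step exploits the monotone ordering $0<\lambda_1 \le \lambda_2 \le \cdots \le \lambda_N$ furnished by the theorem's hypotheses, which gives $1/\lambda_k \ge 1/\lambda_N$ for every index $k \le N$. Applying this termwise to the truncated sum replaces each spectral weight $1/\lambda_k$ by the uniform lower weight $1/\lambda_N$, which then factors out of the finite sum:
\[
E(c) \ge \frac{1}{2} \sum_{k=1}^N \frac{\alpha_k^2}{\lambda_k} \ge \frac{1}{2\lambda_N} \sum_{k=1}^N \alpha_k^2,
\]
establishing the claimed lower bound.

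Because both steps are elementary inequalities on a convergent series of nonnegative terms, there is no substantive analytic obstacle; the only point needing care is the validity and convergence of the spectral expansion of $E(c)$, but this is exactly what Theorem~\ref{thm:spectral_representation} guarantees for $r \in L^2_0(R)$, so I would cite it rather than re-derive it. I would close with a brief remark that the estimate is sharp: equality is attained precisely when the residual $r$ has no spectral content beyond the $N$-th mode and all of its mass among the first $N$ modes is concentrated on the eigenspace with eigenvalue $\lambda_N$, since in that case both the truncation and the weight replacement become equalities.
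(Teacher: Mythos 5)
Your proof is correct and follows essentially the same route as the paper's: both start from the spectral series $E(c)=\tfrac{1}{2}\sum_{k=1}^\infty \alpha_k^2/\lambda_k$ of Theorem~\ref{thm:spectral_representation}, drop the nonnegative tail beyond mode $N$, and use the monotonicity $\lambda_k\le\lambda_N$ to bound each weight $1/\lambda_k$ below by $1/\lambda_N$. Your added sharpness remark (equality when $r$ is supported on the $\lambda_N$-eigenspace among the first $N$ modes with no higher-mode content) is a correct bonus not present in the paper, but the core argument is identical.
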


\begin{proof}
From Theorem~\ref{thm:spectral_representation} we have the spectral formula
\[
E(c)=\frac{1}{2}\sum_{k=1}^\infty \frac{\alpha_k^2}{\lambda_k}.
\]
For fixed $N\ge1$ and each $k\in\{1,\dots,N\}$ we have $\lambda_k\le\lambda_N$, hence $\frac{1}{\lambda_k}\ge\frac{1}{\lambda_N}$. Restricting the infinite sum to the first $N$ terms gives
\[
E(c) \;=\; \frac{1}{2}\sum_{k=1}^\infty \frac{\alpha_k^2}{\lambda_k}
\;\ge\; \frac{1}{2}\sum_{k=1}^N \frac{\alpha_k^2}{\lambda_k}
\;\ge\; \frac{1}{2}\sum_{k=1}^N \frac{\alpha_k^2}{\lambda_N}
\;=\; \frac{1}{2\lambda_N}\sum_{k=1}^N \alpha_k^2,
\]
which is the claimed inequality.
\end{proof}

\begin{remark}[Interpretation and Significance]
The spectral representation in Theorem~\ref{thm:spectral_representation} highlights the structural distinction between the variance and the Poisson energy. While variance ($\sum \alpha_k^2$) weights all spectral modes equally, the Poisson energy ($ \frac{1}{2}\sum \alpha_k^2/\lambda_k$) is a spectrally weighted metric. 
The factor $1/\lambda_k$ acts as an intrinsic \textbf{low-pass weighting}, emphasizing low-frequency modes (corresponding to large-scale imbalances) while attenuating high-frequency components (localized fluctuations).
This property makes $E(c)$ an ideal objective for global placement. It naturally guides the optimization to first resolve macroscopic congestion before refining local details. 

Furthermore, the corollaries provide quantitative characterization for using $E(c)$ as a proxy for variance. The Energy-Variance Inequality (Corollary~\ref{cor:energy_variance}), in particular, formalizes this relationship, showing that the energy is controlled by the variance, with the strength of this connection determined by the domain's geometry via the spectral gap $\lambda_1$. The Mode-Truncated Lower Bound (Corollary~\ref{cor:lower_bound}), offers a practical tool for estimation and diagnosis. It allows for a quick lower estimate of the total energy by considering only the first $N$ principal modes, which can help identify whether density nonuniformities arise primarily from large-scale imbalances or localized fluctuations. 
\end{remark}

Together, these results provide a coherent theoretical foundation that formally justifies the use of Poisson energy as a smooth, analytically tractable surrogate for overlap avoidance in floorplanning.

\section{Mollified Module and Poisson Energy Bound }
\label{sec:pef_convergence}

In the previous section, we analyzed the Poisson energy $E(c)$ as a smooth quadratic proxy for the variance of module density. This provided insight into how the energy penalizes global density imbalances and guides gradient-based optimization. However, the analysis does not directly quantify geometric module overlaps, which is the ultimate physical constraint we aim to eliminate. 

This section establishes a rigorous connection between the mollified Poisson energy and the geometric overlap measure.
We first define mollified densities and the corresponding Poisson energy. Then we establish fundamental properties showing that the energy detects overlap qualitatively. Finally, we derive a quantitative, linear lower bound connecting $E_\varepsilon(c)$ to the total overlap area under a physically motivated low-frequency assumption. These results provide the theoretical guarantees that justify using the mollified Poisson energy as a robust surrogate for overlap avoidance in floorplanning.

\subsection{Assumptions and Preliminaries}
\label{subsec:assumptions}

We now formalize the setup for analyzing the mollified Poisson energy $E_\varepsilon(c)$. Consider a floorplan region $R \subset \mathbb{R}^2$ and a set of modules $\{M_i\}_{i=1}^n$, which are all assumed to be bounded Lipschitz domains with well-defined areas $A_i$ and perimeters $P(M_i)$. The total perimeter is the sum of the perimeters of all modules: $P_\Sigma := \sum_{i=1}^n P(M_i)$.

Our main theoretical results rely on the following assumptions:

\begin{enumerate}[label=(A\arabic*)]

    \item \textbf{Density Feasibility:} The average density satisfies $\bar\rho < 2$. This is a mild condition (typically, $\bar\rho \le 1$ in any placeable design) that ensures certain constants in our bounds are well-defined and positive. \label{assum_a1}

    \item \textbf{Existence of an Optimum:} There exists an optimal, non-overlapping floorpan, denoted $c^{\mathrm{opt}}$, which has the minimum possible wirelength. \label{assum_a2}
    
    \item \textbf{Mollifier Scale:} The mollifier radius $\varepsilon > 0$ is strictly smaller than the minimal inradius of all modules, i.e., $\varepsilon < \min_i \mathrm{inrad}(M_i)$. This technical condition ensures that the ``eroded" module shapes used in the proofs are well-defined. \label{assum_a3}

 \item \textbf{Compact Configuration Space:} The space of allowed floorplan $\mathcal{C} = \{c = (c_1, \dots,$ $c_n)\}$ is a compact subset of $\mathbb{R}^{2n}$. This is a standard physical assumption, ensuring that modules are constrained to a bounded region and cannot move infinitely far away. This assumption is critical for guaranteeing the existence of certain minima in the proof.  \label{assum_a4}  
\end{enumerate}

Next, we introduce key definitions for mollification:

\begin{enumerate}[label=(D\arabic*)]
\item \textbf{Mollifier:} Fix a mollifier $\eta \in C_c^\infty(\mathbb R^2)$ with
\[
\eta \ge 0, \quad \int_{\mathbb R^2} \eta(x)\,\de x = 1, \quad \operatorname{supp} \eta \subset B(0,1).
\]

For $\varepsilon>0$, define the scaled mollifier
\[
\eta_\varepsilon(x) := \frac{1}{\varepsilon^2}\, \eta\Big(\frac{x}{\varepsilon}\Big),
\]
which satisfies $\operatorname{supp} \eta_\varepsilon \subset B(0,\varepsilon)$ and $\int \eta_\varepsilon = 1$.

\item \textbf{Mollified Indicator Function:}   The mollified indicator function of a module $M_i \subset \mathbb R^2$ is defined by
\[
\psi_i^\varepsilon(x) := (\eta_\varepsilon * \mathbf{1}_{M_i})(x)
= \int_{\mathbb R^2} \eta_\varepsilon(x-y) \, \mathbf{1}_{M_i}(y)\, \de y.
\]
Equivalently, using the change of variables $z = \frac{x-y}{\varepsilon}$, we can write
\[
\psi_i^\varepsilon(x) = \int_{B(0,1)} \eta(z) \, \mathbf{1}_{M_i}(x - \varepsilon z)\, \de z.
\]

This shows that $\psi_i^\varepsilon$ is a smooth approximation of the characteristic function $\mathbf{1}_{M_i}$, with
\[
0 \le \psi_i^\varepsilon(x) \le 1, \quad 
\psi_i^\varepsilon(x) \to \mathbf{1}_{M_i}(x) \text{ as } \varepsilon \to 0.
\]

\item \textbf{Mollified Density:} For a floorplan $c = (c_1,\dots,c_n) \in \mathbb R^{2n}$, define the \emph{mollified density} as
\[
\rho_\varepsilon(x;c) := \sum_{i=1}^n \psi_i^\varepsilon(x-c_i), 
\qquad x \in R,
\]
where $\psi_i^\varepsilon$ is the mollified indicator of the $i$-th module.

\item \textbf{Mollified Poisson Energy:} 
The \emph{Poisson energy} of the mollified  floorplan $c$ is then defined by
\begin{equation}
\label{eq_varenergy}
E_\varepsilon(c) := \frac{1}{2} \int_R |\nabla \phi_\varepsilon(x;c)|^2 \, \de x
= \frac{1}{2} \int_R \phi_\varepsilon(x;c)\, \big(\rho_\varepsilon(x;c) - \bar\rho\big)\, \de x,    
\end{equation}
where $\phi_\varepsilon$ is the unique solution to the Poisson problem with Neumann boundary conditions:
\begin{equation}
\label{eq_varpoisson}
\begin{cases}
-\Delta \phi_\varepsilon(x;c) = \rho_\varepsilon(x;c) - \bar\rho, & \text{in } R, \\
\frac{\partial \phi_\varepsilon}{\partial n}(x;c) = 0, & \text{on } \partial R, \\
\int_R \phi_\varepsilon(x;c) \, \de x = 0.
\end{cases}
\end{equation}

Here, $\bar\rho = \frac{1}{|R|}\sum_{i=1}^n A_i$ is the average density.  
This smooth energy measures deviation from uniform density and depends smoothly on $c$ due to the mollification.
\end{enumerate}

This mollified formulation provides a smooth surrogate for the otherwise discontinuous overlap constraint. The mollified density $\rho_\varepsilon$ and the corresponding Poisson energy $E_\varepsilon(c)$ retain the essential geometric information of module placement while being differentiable with respect to $c$. This makes $E_\varepsilon(c)$ suitable for gradient-based optimization.

\subsection{Relating Mollified Poisson Energy to Module Overlap}
\label{subsec:energy_overlap}

To quantify how the mollified Poisson energy $E_\varepsilon(c)$ reflects geometric overlap, we first introduce the notions of \emph{eroded module shapes} and \emph{eroded overlap}. These constructions allow us to connect the smooth, gradient-friendly energy to the underlying non-smooth geometric intersection of modules.

For each module $M_i$, define its \emph{eroded shape} at scale $\varepsilon>0$ by
\[
M_i^{-\varepsilon} := \{ x \in M_i : \operatorname{dist}(x, \mathbb{R}^2 \setminus M_i) \ge \varepsilon \},
\]
i.e., the set of points in $M_i$ whose distance to the complement is at least $\varepsilon$.

For a floorplan $c = (c_1,\dots,c_n)$, define the \emph{eroded overlap set} by
\[
\mathcal O_\varepsilon(c) := \bigcup_{i<j} \big( (M_i^{-\varepsilon} + c_i) \cap (M_j^{-\varepsilon} + c_j) \big).
\]

The \emph{geometric overlap set} is
\[
\mathcal O(c) := \bigcup_{i<j} \big( (M_i + c_i) \cap (M_j + c_j) \big),
\]
whose area is denoted $|\mathcal O(c)|$.

The following lemma establishes a basic but crucial link between the eroded and geometric overlaps, providing a foundation for connecting overlap to Poisson energy.

\begin{lemma}[Bound on Eroded Overlap Area]
\label{lem:areabound}
Let $\{M_i\}_{i=1}^n$ be a collection of bounded domains in $\mathbb{R}^2$ with finite perimeter $P(M_i)$ in the sense of De Giorgi (i.e., sets of finite perimeter). Let the total perimeter be $P_\Sigma := \sum_{i=1}^n P(M_i)$. For a placement vector $c=(c_1, \dots, c_n)$, define the geometric overlap set $\mathcal{O}(c) := \bigcup_{i<j} (M_i+c_i) \cap (M_j+c_j)$ and the eroded overlap set $\mathcal{O}_\varepsilon(c) := \bigcup_{i<j} (M_i^{-\varepsilon}+c_i) \cap (M_j^{-\varepsilon}+c_j)$, where $M_i^{-\varepsilon} := M_i \ominus B_\varepsilon$ is the Minkowski erosion by a disk of radius $\varepsilon > 0$.

Then, the areas of these sets satisfy the following inequality:
\[
|\mathcal{O}_\varepsilon(c)| \ge \big(|\mathcal{O}(c)| -  \varepsilon P_\Sigma\big)_+,
\]
where $(t)_+ := \max\{t, 0\}$.
\end{lemma}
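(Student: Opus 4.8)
The plan is to reduce the whole statement to a single geometric ``collar'' estimate together with an elementary set inclusion. Since $M_i^{-\varepsilon}\subseteq M_i$, translation gives $(M_i^{-\varepsilon}+c_i)\subseteq(M_i+c_i)$ for every $i$, hence $\mathcal{O}_\varepsilon(c)\subseteq\mathcal{O}(c)$ and $|\mathcal{O}(c)|-|\mathcal{O}_\varepsilon(c)|=|\mathcal{O}(c)\setminus\mathcal{O}_\varepsilon(c)|$. It therefore suffices to bound the ``lost'' area $|\mathcal{O}(c)\setminus\mathcal{O}_\varepsilon(c)|$ by $\varepsilon P_\Sigma$; the truncation $(\cdot)_+$ then comes for free from $|\mathcal{O}_\varepsilon(c)|\ge 0$, since a negative right-hand side makes the inequality trivial.

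The conceptual heart is the pointwise inclusion
\[
\mathcal{O}(c)\setminus\mathcal{O}_\varepsilon(c)\ \subseteq\ \bigcup_{i=1}^n\big((M_i+c_i)\setminus(M_i^{-\varepsilon}+c_i)\big).
\]
I would establish it by contraposition: given $x\in\mathcal{O}(c)$, choose $i<j$ with $x\in(M_i+c_i)\cap(M_j+c_j)$; if $x$ belonged to none of the collars $(M_k+c_k)\setminus(M_k^{-\varepsilon}+c_k)$, then from $x\in(M_i+c_i)$ and $x\in(M_j+c_j)$ we would deduce $x\in(M_i^{-\varepsilon}+c_i)\cap(M_j^{-\varepsilon}+c_j)\subseteq\mathcal{O}_\varepsilon(c)$, contradicting $x\notin\mathcal{O}_\varepsilon(c)$. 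Taking areas, using countable subadditivity and translation invariance, this yields $|\mathcal{O}(c)\setminus\mathcal{O}_\varepsilon(c)|\le\sum_{i=1}^n|M_i\setminus M_i^{-\varepsilon}|$.

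It then remains to prove the collar area bound $|M_i\setminus M_i^{-\varepsilon}|\le\varepsilon P(M_i)$ for each module. The plan is to apply the coarea formula to the distance function $d_i(x):=\operatorname{dist}(x,\mathbb{R}^2\setminus M_i)$, which is $1$-Lipschitz with $|\nabla d_i|=1$ a.e.\ on $\{d_i>0\}$. Since $M_i\setminus M_i^{-\varepsilon}=\{0<d_i<\varepsilon\}$ up to a null set, coarea gives $|M_i\setminus M_i^{-\varepsilon}|=\int_0^\varepsilon\mathcal{H}^1(\{d_i=t\})\,\de t$, and the claim follows once we show $\mathcal{H}^1(\{d_i=t\})\le P(M_i)$ for a.e.\ $t\in(0,\varepsilon)$, i.e.\ that the inner parallel level curves are no longer than the boundary.

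I expect this last inequality to be the main obstacle. For convex modules --- in particular the rectangular modules of the PeF model, where $M_i\ominus B_\varepsilon$ is the rectangle shrunk by $\varepsilon$ on each side and one computes directly $|M_i\setminus M_i^{-\varepsilon}|=\varepsilon P(M_i)-4\varepsilon^2\le\varepsilon P(M_i)$ --- the bound is immediate, as inner parallel curves only shorten. For a general set of finite perimeter, however, the perimeter of the erosion need not be monotone in $t$, so $\mathcal{H}^1(\{d_i=t\})\le P(M_i)$ must be extracted from the finer structure of the level sets of $d_i$ (for instance via a Gauss--Bonnet / total-curvature argument for simply connected shapes, or a rectifiability argument on the reduced boundary $\partial^* M_i$). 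I would isolate this as a standalone geometric estimate and, for the floorplanning application, simply invoke the rectangular structure of the modules to render it elementary.
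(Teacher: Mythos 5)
Your reduction is exactly the paper's: the same inclusion $\mathcal{O}(c)\setminus\mathcal{O}_\varepsilon(c)\subseteq\bigcup_{i=1}^n\big((M_i+c_i)\setminus(M_i^{-\varepsilon}+c_i)\big)$, followed by subadditivity and translation invariance of Lebesgue measure, and the same final truncation via $(\cdot)_+$. The only divergence is at the per-module collar estimate $|M_i\setminus M_i^{-\varepsilon}|\le\varepsilon P(M_i)$: the paper invokes this as a standard erosion inequality (its \eqref{eq:erosion_bound_single}, cited to Federer and Schneider), whereas you attempt to prove it via the coarea formula for the distance function and correctly isolate the level-length bound $\mathcal{H}^1(\{d_i=t\})\le P(M_i)$ as the crux, verifying it only for convex (in particular rectangular) modules.

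Your caution at that step is not pedantry: the estimate, and with it the lemma in the stated generality of De Giorgi sets, is actually false, so the gap you flag cannot be closed by a cleverer argument. Take $M$ to be a disk of radius $R$ with two disjoint closed disks of radius $\delta$ removed from its interior (a bounded, smooth, multiply connected Lipschitz domain of finite perimeter), all boundary circles more than $2\varepsilon$ apart. Erosion shrinks the outer circle but enlarges each hole, so
\[
|M|-|M^{-\varepsilon}|
=\big(2\pi R\varepsilon-\pi\varepsilon^{2}\big)+2\big(2\pi\delta\varepsilon+\pi\varepsilon^{2}\big)
=\varepsilon P(M)+\pi\varepsilon^{2}>\varepsilon P(M),
\]
and likewise $\mathcal{H}^1(\{d=t\})=P(M)+2\pi t>P(M)$ for all small $t$; with sufficiently many holes one can even overwhelm the summed budget $\varepsilon P_\Sigma$ and falsify the lemma's conclusion itself (a separate subtlety is that erosion, unlike De Giorgi perimeter, is not invariant under modification by null sets, so the statement also tacitly requires a good representative). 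Hence some hypothesis such as convexity or simple connectivity of each $M_i$ is genuinely needed. Your proposed repair is viable: if $M_i$ is simply connected, then $\mathbb{R}^2\setminus M_i^{-\varepsilon}$ is the $\varepsilon$-neighborhood of the connected set $\mathbb{R}^2\setminus M_i$, hence connected, so every component of the eroded set is simply connected and your Gauss--Bonnet argument gives $\mathcal{H}^1(\{d_i=t\})\le P(M_i)-2\pi t$ for piecewise-smooth boundaries, with cut-locus corrections only shortening. For the rectangular modules actually used in PeF, your direct computation $|M_i\setminus M_i^{-\varepsilon}|=\varepsilon P(M_i)-4\varepsilon^{2}$ settles the matter, so your proof is complete in the regime the paper needs --- and, unlike the paper's over-general citation, it makes the required hypothesis explicit.
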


The proof, which relies on standard inequalities from geometric measure theory, is provided in Appendix \ref{app:proof_of_lemma3.1}.

\begin{lemma}[Zero Mean of the Density Fluctuation]
\label{lem:zeromean}
Let $\{M_i\}_{i=1}^n$ be a collection of modules in $\mathbb{R}^2$ with corresponding areas $\{A_i\}_{i=1}^n$, placed in a larger rectangle $R$ with area $|R|$. Let the mollified density be defined as $\rho_\varepsilon(x) := \sum_{i=1}^n \psi_i^\varepsilon(x-c_i)$, where $\psi_i^\varepsilon$ is the mollified indicator function of $M_i$. Moreover, let the average density be the constant $\bar\rho := \frac{1}{|R|} \sum_{i=1}^n A_i$.

The density fluctuation function, defined as $f_\varepsilon(x) := \rho_\varepsilon(x) - \bar\rho$, has a mean value of zero over the domain $R$. That is:
\[
\int_R f_\varepsilon(x) \, \de x = 0.
\]
\end{lemma}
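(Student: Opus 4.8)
The plan is to reduce the claim to the mass-preservation property of convolution, applied one module at a time. First I would use linearity of the integral together with the definitions (D2)--(D3) to write
\[
\int_R f_\varepsilon(x)\,\de x = \sum_{i=1}^n \int_R \psi_i^\varepsilon(x-c_i)\,\de x \;-\; \bar\rho\,|R|,
\]
and then recall that $\bar\rho\,|R| = \sum_{i=1}^n A_i$ by the very definition of the average density. The whole statement therefore collapses to showing that each mollified, translated module contributes exactly its own area, i.e. $\int_R \psi_i^\varepsilon(x-c_i)\,\de x = A_i$ for every $i$; summing these identities and subtracting $\sum_i A_i$ gives zero.

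For a single module I would compute the integral by unfolding the convolution and applying Tonelli's theorem (the integrand is nonnegative, so no integrability subtleties arise). Writing $\psi_i^\varepsilon(x-c_i) = \int_{\mathbb{R}^2}\eta_\varepsilon(x-c_i-y)\,\mathbf{1}_{M_i}(y)\,\de y$ and exchanging the order of integration yields
\[
\int_R \psi_i^\varepsilon(x-c_i)\,\de x = \int_{M_i}\Big(\int_R \eta_\varepsilon(x-c_i-y)\,\de x\Big)\de y.
\]
The inner integral equals the total mollifier mass $\int_{\mathbb{R}^2}\eta_\varepsilon = 1$ precisely when the ball $B(c_i+y,\varepsilon)$, which contains the support of $x\mapsto\eta_\varepsilon(x-c_i-y)$, lies inside $R$; in that case the outer integral reduces to $\int_{M_i}1\,\de y = A_i$, as desired.

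The hard part --- indeed the only point that requires care --- is the boundary effect. The identity $\int_R \psi_i^\varepsilon(x-c_i)\,\de x = A_i$ can fail if the $\varepsilon$-dilated translated module $(M_i+c_i)\oplus B_\varepsilon$ (the $\varepsilon$-neighborhood of $M_i+c_i$) protrudes across $\partial R$, since then part of the mollified mass leaks outside $R$ and is not captured by the integral. I would therefore invoke the standing feasibility of the floorplan --- modules are placed within $R$ --- together with the mollifier-scale condition~\ref{assum_a3}, to guarantee that the support of each shifted mollified indicator remains inside $R$; equivalently, one assumes a collar of width at least $\varepsilon$ separating every placed module from $\partial R$. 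Under this containment the inner integral above is exactly $1$ for all $y\in M_i$, so the per-module contribution is exactly $A_i$ and the total reduces to $\sum_i A_i - \sum_i A_i = 0$. If one prefers to dispense with any boundary hypothesis, the same conclusion follows by realizing $\rho_\varepsilon$ through a reflected extension consistent with the Neumann data, which preserves total mass by construction and reproduces the compatibility condition required for solvability of \eqref{eq_varpoisson}.
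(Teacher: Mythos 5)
Your proposal is correct and follows essentially the same route as the paper's proof: split $\int_R f_\varepsilon$ by linearity, compute $\int \psi_i^\varepsilon = A_i$ via Fubini--Tonelli and the unit mass of $\eta_\varepsilon$, and cancel against $\bar\rho\,|R| = \sum_i A_i$. Where you genuinely go beyond the paper is the boundary discussion. The paper's proof computes $\int_{\mathbb{R}^2}\psi_i^\varepsilon\,\de x = A_i$ and then asserts $\int_R \psi_i^\varepsilon(x-c_i)\,\de x = A_i$ ``by translation invariance'' --- a step that is only valid when $\operatorname{supp}\psi_i^\varepsilon(\cdot - c_i) = (M_i + c_i)\oplus B_\varepsilon \subset R$; for a module abutting $\partial R$ (which the feasible set $\mathcal{C}_R$ of Section~\ref{sec:pgd_mollified} permits, since centers may sit at $x_i = w_i/2$), mollified mass leaks outside $R$ and the per-module identity, hence the zero-mean claim and the Neumann compatibility condition for \eqref{eq_varpoisson}, fail as literally stated. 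You correctly isolate this as the only delicate point and supply two valid fixes: an $\varepsilon$-collar separating placed modules from $\partial R$, or a reflected extension consistent with the Neumann data, which conserves mass in $R$ by construction.

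One small correction to your attribution: Assumption~\ref{assum_a3} does not deliver the containment you need. It bounds $\varepsilon$ by the minimal \emph{inradius of the modules} (so that eroded shapes $M_i^{-\varepsilon}$ are nonempty), and says nothing about the distance from placed modules to $\partial R$. So the collar condition you state ``equivalently'' is in fact an additional hypothesis, not a consequence of \ref{assum_a3}; it is, however, exactly the hypothesis the paper's own proof implicitly uses, so your version is the more honest one.
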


\begin{proof}
We prove the statement by direct integration of the function $f_\varepsilon$ over the domain $R$. By the definition of $f_\varepsilon$, we can split the integral into two parts:
\[
\int_R f_\varepsilon(x) \, \de x = \int_R \rho_\varepsilon(x) \, \de x - \int_R \bar\rho \, \de x.
\]
We evaluate each term on the right-hand side separately.

First, we consider the integral of the mollified density $\rho_\varepsilon$. By its definition and the linearity of integration, we have:
\begin{align*}
    \int_R \rho_\varepsilon(x) \, \de x &= \int_R \sum_{i=1}^n \psi_i^\varepsilon(x-c_i) \, \de x \\
    &= \sum_{i=1}^n \int_R \psi_i^\varepsilon(x-c_i) \, \de x.
\end{align*}
The integral of a mollified indicator function $\psi_i^\varepsilon = \eta_\varepsilon * \mathbf{1}_{M_i}$ equals the area of the original set $M_i$. This follows from Fubini's theorem and the fact that the mollifier $\eta_\varepsilon$ has a unit integral:
\[
\int_{\mathbb{R}^2} \psi_i^\varepsilon(x) \, \de x = \left(\int_{\mathbb{R}^2} \eta_\varepsilon(y)\,\de y\right) \left(\int_{\mathbb{R}^2} \mathbf{1}_{M_i}(z)\,\de z\right) = 1 \cdot A_i = A_i.
\]
By the translation invariance of the integral, it follows that $\int_R \psi_i^\varepsilon(x-c_i) \, \de x = A_i$. Therefore, the first term is the sum of all module areas:
\[
\int_R \rho_\varepsilon(x) \, \de x = \sum_{i=1}^n A_i.
\]

Next, we consider the integral of the average density $\bar\rho$. Since $\bar\rho$ is a constant with respect to the integration variable $x$, its integral over $R$ is the constant multiplied by the area of $R$:
\[
\int_R \bar\rho \, \de x = \bar\rho \cdot |R| = \left( \frac{1}{|R|} \sum_{i=1}^n A_i \right) \cdot |R| = \sum_{i=1}^n A_i.
\]

Finally, substituting the results for both terms back into the original expression, we find:
\[
\int_R f_\varepsilon(x) \, \de x = \sum_{i=1}^n A_i - \sum_{i=1}^n A_i = 0.
\]
This completes the proof.
\end{proof}

Lemma~\ref{lem:zeromean} guarantees that the mollified density fluctuation $f_\varepsilon$ is zero-mean, enabling a direct link between its $L^2$-norm, the Poisson energy $E_\varepsilon(c)$, and the geometric overlap. Before analyzing quantitative lower bounds, we note a simpler fact: any non-zero geometric overlap produces strictly positive Poisson energy. Theorem~\ref{lem:qualitative_overlap} formalizes this, showing that $E_\varepsilon(c)$ acts as a continuous overlap detector without relying on spectral truncation or constants.


\begin{theorem}[Qualitative Overlap Detection via Poisson Energy]
\label{lem:qualitative_overlap}
Let $R\subset\mathbb{R}^2$ be a bounded domain with a smooth boundary and $\{M_i\}$ be bounded Lipschitz modules. For the geometric overlap $\mathcal O(c)$ and the Poisson energy $E_\varepsilon(c)$ as in the preliminaries, if the geometric overlap is non-empty, then the Poisson energy is strictly positive. That is:
\[
|\mathcal O(c)| > 0 \implies E_\varepsilon(c) > 0.
\]
\end{theorem}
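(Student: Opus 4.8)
The plan is to combine the Dirichlet representation of $E_\varepsilon$ with a pointwise lower bound on the mollified density, reducing the claim to showing that overlap is incompatible with a spatially uniform density. First I would use \eqref{eq_varenergy}, namely $E_\varepsilon(c)=\tfrac12\int_R|\nabla\phi_\varepsilon|^2\,\de x\ge 0$, to characterize when the energy vanishes. If $E_\varepsilon(c)=0$, then $\nabla\phi_\varepsilon=0$ a.e., so $\phi_\varepsilon$ is constant on the connected domain $R$, and the normalization $\int_R\phi_\varepsilon=0$ forces $\phi_\varepsilon\equiv 0$. Substituting back into \eqref{eq_varpoisson} gives $\rho_\varepsilon-\bar\rho=-\Delta\phi_\varepsilon=0$, i.e. $\rho_\varepsilon\equiv\bar\rho$ on $R$. (Equivalently, one may apply Corollary~\ref{cor:zero_equivalence} to the zero-mean fluctuation $f_\varepsilon$, whose vanishing mean is guaranteed by Lemma~\ref{lem:zeromean}.) Thus it suffices to prove the contrapositive: if $\rho_\varepsilon\equiv\bar\rho$, then $|\mathcal O(c)|=0$.

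The decisive elementary step is a pointwise lower bound on the eroded overlap. If $x\in(M_i^{-\varepsilon}+c_i)\cap(M_j^{-\varepsilon}+c_j)$, then by the definition of the eroded shapes $B(x,\varepsilon)\subset(M_i+c_i)$ and $B(x,\varepsilon)\subset(M_j+c_j)$; since $\operatorname{supp}\eta_\varepsilon\subset B(0,\varepsilon)$ and $\int\eta_\varepsilon=1$, both mollified indicators satisfy $\psi_i^\varepsilon(x-c_i)=\psi_j^\varepsilon(x-c_j)=1$. As every summand of $\rho_\varepsilon$ is non-negative, this yields $\rho_\varepsilon(x)\ge 2$ on $\mathcal O_\varepsilon(c)$. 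By Assumption~\ref{assum_a1}, $\bar\rho<2$, so at every such point $f_\varepsilon(x)=\rho_\varepsilon(x)-\bar\rho>0$. Consequently, whenever $\mathcal O_\varepsilon(c)$ has positive measure, $f_\varepsilon>0$ on a set of positive measure, $\rho_\varepsilon\not\equiv\bar\rho$, and hence $E_\varepsilon(c)>0$. To transfer positivity from the geometric to the eroded overlap I would invoke Lemma~\ref{lem:areabound}, which gives $|\mathcal O_\varepsilon(c)|\ge(|\mathcal O(c)|-\varepsilon P_\Sigma)_+$.

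The hard part will be precisely this last transfer: the eroded overlap can be empty even when $|\mathcal O(c)|>0$ — for example two rectangles meeting in a sliver thinner than $2\varepsilon$ — in which case Lemma~\ref{lem:areabound} only returns the vacuous bound $0$ and the pointwise argument has no point to act on. I see two ways to close this gap. The clean route is to observe that, for $\varepsilon<|\mathcal O(c)|/P_\Sigma$ (which is compatible with the admissible range in Assumption~\ref{assum_a3}), Lemma~\ref{lem:areabound} already forces $|\mathcal O_\varepsilon(c)|>0$, so $E_\varepsilon(c)>0$ follows immediately. For a fully scale-free statement I would instead argue non-constancy of $\rho_\varepsilon$ directly: writing $\rho_\varepsilon=\eta_\varepsilon*\rho$ with $\rho=\sum_i\mathbf 1_{M_i+c_i}$ and using $\rho\ge 2\,\mathbf 1_{\mathcal O(c)}$ pointwise, one gets $\rho_\varepsilon\ge 2\,(\eta_\varepsilon*\mathbf 1_{\mathcal O(c)})$, which tends to $2$ at Lebesgue density points of $\mathcal O(c)$ as the averaging concentrates, again exceeding $\bar\rho$.

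The remaining subtlety, which I expect to be the genuine obstacle, is that this density-point estimate is naturally a small-$\varepsilon$ statement, so establishing $\rho_\varepsilon\not\equiv\bar\rho$ for \emph{every} admissible $\varepsilon$ requires ruling out a uniform double cover. This is exactly where Assumption~\ref{assum_a1} does the essential work: a configuration with $\rho_\varepsilon\equiv\bar\rho$ and positive overlap would, together with $\operatorname{supp}\rho_\varepsilon\subset R$ and $\int_R\rho_\varepsilon=\sum_i A_i=\bar\rho|R|$, push the density toward the borderline value $\bar\rho=2$ of a perfect double cover, which the strict inequality $\bar\rho<2$ forbids. Making this quantitative for all $\varepsilon$ is the one place I anticipate needing more than the purely local argument above.
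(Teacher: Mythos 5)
Your core argument is the same as the paper's: pass to the eroded overlap $\mathcal{O}_\varepsilon(c)$, note $\rho_\varepsilon \ge 2$ there, invoke Assumption~\ref{assum_a1} to get $f_\varepsilon \ge 2-\bar\rho>0$ on that set, and conclude $E_\varepsilon(c)>0$ from $f_\varepsilon\not\equiv 0$; your Dirichlet-form characterization of $E_\varepsilon=0$ (via \eqref{eq_varenergy}, connectedness of $R$, and the zero-mean normalization) is an equivalent, if anything cleaner, substitute for the paper's spectral-expansion step. The one place you diverge is exactly the step you flag as the hard part, and you are right to flag it: the paper transfers positivity from $\mathcal{O}(c)$ to $\mathcal{O}_\varepsilon(c)$ with the phrase ``for sufficiently small $\varepsilon$,'' which is precisely the gap you describe. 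For the fixed $\varepsilon$ permitted by Assumption~\ref{assum_a3}, a positive-area overlap confined to a sliver of width less than $2\varepsilon$ leaves $\mathcal{O}_\varepsilon(c)$ empty, and Lemma~\ref{lem:areabound} then returns only the vacuous bound; so the worry you raise applies verbatim to the paper's own proof, where it goes unacknowledged. Your conditional repair $\varepsilon<|\mathcal{O}(c)|/P_\Sigma$ is sound and makes explicit what the paper assumes tacitly; in that regime your proof is complete and more rigorous than the published one.

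What does not work is your closing paragraph: Assumption~\ref{assum_a1} has no global ``borderline double cover'' role --- its only function is the pointwise bound $2-\bar\rho>0$ on the eroded overlap --- and the Lebesgue density-point argument is intrinsically a small-$\varepsilon$ statement, so your scale-free route remains open, as you concede. But your own first-paragraph reduction contains a complete, $\varepsilon$-uniform proof that you stopped one step short of. If $E_\varepsilon(c)=0$, then $\rho_\varepsilon\equiv\bar\rho$ on $R$ (by continuity of the mollified density, the a.e.\ identity upgrades to everywhere on $\overline{R}$). Since $\int_{\mathbb{R}^2}\rho_\varepsilon\,\de x=\sum_i A_i=\bar\rho\,|R|$ --- the computation in the proof of Lemma~\ref{lem:zeromean} --- the exterior integral $\int_{\mathbb{R}^2\setminus R}\rho_\varepsilon\,\de x$ vanishes, so the nonnegative continuous function $\rho_\varepsilon$ is identically $0$ off $R$, and continuity forces $\rho_\varepsilon=0=\bar\rho$ on $\partial R$, contradicting $\bar\rho>0$. (Your assertion $\operatorname{supp}\rho_\varepsilon\subset R$ is false when a module abuts $\partial R$, but it is also unnecessary: the mass identity suffices.) This closes the theorem for every $\varepsilon>0$, without Assumptions~\ref{assum_a1} or~\ref{assum_a3} and without the eroded-overlap machinery --- and it exposes a structural point: $E_\varepsilon(c)>0$ holds for \emph{every} configuration with positive total module area, overlapping or not, so the qualitative statement is true for reasons independent of overlap, and the genuine overlap-detection content resides in the quantitative bound of Theorem~\ref{thm:quant_overlap}.
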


\begin{proof}
By Assumption \ref{assum_a3}, the mollifier radius \(\varepsilon\) is smaller than the minimal inradius of all modules (this ensures that eroded modules are non-empty).  
Let the eroded overlap region be
\[
A := \mathcal{O}_\varepsilon(c) = \bigcup_{i\neq j} \big(M_i \ominus B_\varepsilon \cap M_j \ominus B_\varepsilon\big),
\]
where \(\ominus B_\varepsilon\) denotes the Minkowski erosion by a disk of radius \(\varepsilon\).  
If the geometric overlap \(|\mathcal O(c)|>0\), then for sufficiently small \(\varepsilon\), the eroded overlap \(A\) is non-empty, i.e., \(|A|>0\).

On the set \(A\), at least two mollified indicators \(\psi_i^\varepsilon\) overlap, so the mollified density satisfies
\[
\rho_\varepsilon(x) = \sum_i \psi_i^\varepsilon(x) \ge 2.
\]
Since the average density satisfies \(\bar\rho < 2\) (see Assumption \ref{assum_a1}), it follows that
\begin{equation}
\label{eq:fvar}
f_\varepsilon(x) = \rho_\varepsilon(x) - \bar\rho \ge 2 - \bar\rho > 0 \quad \text{for all } x \in A.    
\end{equation}
Hence \(f_\varepsilon \not\equiv 0\) in \(L^2(R)\), 
implying that the density fluctuation is nontrivial.


Let $\{\psi_k\}_{k\ge0}$ be the complete orthonormal set of Neumann eigenfunctions of $-\Delta$ on $R$, with corresponding eigenvalues $0=\lambda_0<\lambda_1\le \lambda_2 \le \cdots$. By Lemma \ref{lem:zeromean}, $f_\varepsilon$ has zero mean, hence it lies in the space orthogonal to the constant eigenfunction $\psi_0$. We can expand $f_\varepsilon$ in the eigenbasis:
\[
f_\varepsilon = \sum_{k\ge 1} a_k \psi_k, \quad \text{where } a_k = \langle f_\varepsilon, \psi_k \rangle.
\]
The Poisson energy is half the squared $H^{-1}$ norm of $f_\varepsilon$, which is given by the spectral representation:
\[
2 E_\varepsilon(c) = \|f_\varepsilon\|_{H^{-1}}^2 = \sum_{k\ge 1} \frac{a_k^2}{\lambda_k}.
\]

 
From Equation \eqref{eq:fvar}, we know that $f_\varepsilon$ is not the zero function. Hence by Parseval's identity, $\|f_\varepsilon\|_{L^2}^2 = \sum_{k\ge 1} a_k^2 > 0$. This implies that there must be at least one coefficient, say $a_{k_0}$ for some $k_0 \ge 1$, that is non-zero.

Consequently, the squared coefficient $a_{k_0}^2$ is strictly positive. Since $\lambda_{k_0}$ (for $k_0 \ge 1$) is also strictly positive, the term $a_{k_0}^2 / \lambda_{k_0}$ is strictly positive.
The full sum for the energy is a sum of this strictly positive term and other non-negative terms ($a_k^2/\lambda_k \ge 0$ for all $k$). Therefore, the sum itself must be strictly positive:
\[
2 E_\varepsilon(c) = \sum_{k\ge 1} \frac{a_k^2}{\lambda_k} \ge \frac{a_{k_0}^2}{\lambda_{k_0}} > 0.
\]
This implies that $E_\varepsilon(c) > 0$, which completes the proof.
\end{proof}

Theorem~\ref{lem:qualitative_overlap} establishes that the Poisson energy acts as a \emph{qualitative certificate} of overlap: it is strictly positive whenever the geometric overlap is non-empty. Building on this, the following Theorem~\ref{thm:quant_overlap} provides a \emph{quantitative} guarantee, giving a linear lower bound on $E_\varepsilon(c)$ proportional to the overlap area $|\mathcal{O}(c)|$ with a placement-independent constant $C>0$. Deriving this bound requires analyzing the mollified density variance and its relation to the area of the \emph{eroded overlap region}. These results demonstrate that the Poisson energy serves as a quantitative surrogate for geometric overlap, providing a formal justification for its use as a smooth objective in floorplan optimization.

\begin{theorem}[Overlap Lower Bound Under Low-Frequency Dominance]
\label{thm:quant_overlap}
Let $f_\varepsilon = \rho_\varepsilon(\cdot;c)-\bar\rho$ be the mollified residual density of a floorplan $c$, and let $\alpha_k = \langle f_\varepsilon, \psi_k \rangle$ be its spectral coefficients with respect to the Neumann Laplacian eigenpairs $(\lambda_k, \psi_k)$. 

Assume that for some integer $N \ge 1$ and fraction $\beta \in (0, 1]$, the \textbf{low-frequency dominance condition} holds:
\begin{equation}
\label{eq:lowfreq_cond}
\sum_{k=1}^N \alpha_k^2 \ge \beta \, \mathrm{Var}(\rho_\varepsilon).    
\end{equation}
Then, provided $\bar\rho < 2$, the Poisson energy $E_\varepsilon(c)$ satisfies the linear lower bound:
\[
E_\varepsilon(c) \ge C \cdot \big(|\mathcal O(c)| - \varepsilon P_\Sigma\big)_+,
\]
where the constant $C = \frac{\beta(2-\bar\rho)^2}{2\lambda_N}$ is positive and independent of $\varepsilon$.
\end{theorem}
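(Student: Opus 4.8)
The plan is to chain three estimates: a purely spectral lower bound on $E_\varepsilon(c)$, the low-frequency dominance hypothesis~\eqref{eq:lowfreq_cond}, and a geometric estimate converting the density variance into overlap area. No single step is deep; the content lies in assembling ingredients already established in the excerpt.

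First I would record that $E_\varepsilon(c)$ admits the same spectral representation as in Theorem~\ref{thm:spectral_representation}, namely $E_\varepsilon(c) = \tfrac12 \sum_{k\ge1} \alpha_k^2/\lambda_k$, since the mollified residual $f_\varepsilon$ is zero-mean (Lemma~\ref{lem:zeromean}) and lies in $L^2_0(R)$. The Mode-Truncated Lower Bound (Corollary~\ref{cor:lower_bound}) then applies verbatim with the coefficients $\alpha_k$, yielding $E_\varepsilon(c) \ge \tfrac{1}{2\lambda_N}\sum_{k=1}^N \alpha_k^2$. Feeding in the assumed inequality~\eqref{eq:lowfreq_cond} gives the intermediate bound $E_\varepsilon(c) \ge \tfrac{\beta}{2\lambda_N}\,\mathrm{Var}(\rho_\varepsilon)$.

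The substantive step is to bound $\mathrm{Var}(\rho_\varepsilon) = \|f_\varepsilon\|_{L^2(R)}^2$ from below by the overlap area. The key pointwise observation, identical to the one driving Theorem~\ref{lem:qualitative_overlap}, is that a mollified indicator $\psi_i^\varepsilon(\cdot - c_i)$ equals exactly $1$ on the translated eroded module $M_i^{-\varepsilon}+c_i$, because for such points the ball of radius $\varepsilon$ supporting $\eta_\varepsilon$ lies entirely inside $M_i$. Hence on the eroded overlap set $\mathcal O_\varepsilon(c)$ at least two indicators saturate, so $\rho_\varepsilon \ge 2$ and therefore $f_\varepsilon = \rho_\varepsilon - \bar\rho \ge 2-\bar\rho > 0$ there (using $\bar\rho<2$). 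Restricting the $L^2$ integral to $\mathcal O_\varepsilon(c)$ then gives $\mathrm{Var}(\rho_\varepsilon) \ge (2-\bar\rho)^2\,|\mathcal O_\varepsilon(c)|$, and Lemma~\ref{lem:areabound} replaces $|\mathcal O_\varepsilon(c)|$ by the geometric overlap via $|\mathcal O_\varepsilon(c)| \ge (|\mathcal O(c)| - \varepsilon P_\Sigma)_+$.

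Combining the three inequalities yields $E_\varepsilon(c) \ge \tfrac{\beta(2-\bar\rho)^2}{2\lambda_N}\,(|\mathcal O(c)|-\varepsilon P_\Sigma)_+$, which is exactly the claim with $C = \tfrac{\beta(2-\bar\rho)^2}{2\lambda_N}$; this $C$ is positive under~\ref{assum_a1} and manifestly $\varepsilon$-independent since $\lambda_N$, $\beta$, and $\bar\rho$ do not depend on $\varepsilon$. I expect the main obstacle to be less a computation than a bookkeeping check: verifying that the erosion radius in the definition of $\mathcal O_\varepsilon(c)$ matches the mollifier scale, so that the saturation $\psi_i^\varepsilon = 1$ (and hence $\rho_\varepsilon \ge 2$) genuinely holds on $\mathcal O_\varepsilon(c)$, and confirming that the two notations for erosion, $M_i^{-\varepsilon}$ and $M_i \ominus B_\varepsilon$, coincide so that Lemma~\ref{lem:areabound} applies without modification. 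Assumption~\ref{assum_a3} is what guarantees these eroded sets are nonempty and the argument non-vacuous.
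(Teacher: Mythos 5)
Your proposal is correct and follows essentially the same three-step chain as the paper's own proof: the truncated spectral bound $E_\varepsilon(c) \ge \tfrac{1}{2\lambda_N}\sum_{k=1}^N \alpha_k^2$ combined with the dominance condition~\eqref{eq:lowfreq_cond}, the pointwise bound $f_\varepsilon \ge 2-\bar\rho$ on the eroded overlap set giving $\mathrm{Var}(\rho_\varepsilon) \ge (2-\bar\rho)^2\,|\mathcal{O}_\varepsilon(c)|$, and Lemma~\ref{lem:areabound} to pass to the geometric overlap. If anything, you are slightly more careful than the paper, which asserts $\rho_\varepsilon \ge 2$ on $\mathcal{O}_\varepsilon(c)$ without the saturation argument ($\psi_i^\varepsilon \equiv 1$ on $M_i^{-\varepsilon}+c_i$ since $\operatorname{supp}\eta_\varepsilon \subset B(0,\varepsilon)$) that you spell out.
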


\begin{proof}
The mollified Poisson energy admits the spectral representation
\[
E_\varepsilon(c) = \frac{1}{2}\sum_{k=1}^\infty \frac{\alpha_k^2}{\lambda_k} \ge \frac{1}{2}\sum_{k=1}^N \frac{\alpha_k^2}{\lambda_k}.
\]
Since $\lambda_1 \le \dots \le \lambda_N$, we have $\frac{1}{\lambda_k} \ge \frac{1}{\lambda_N}$ for all $k=1,\dots,N$, giving
\[
E_\varepsilon(c) \ge \frac{1}{2\lambda_N}\sum_{k=1}^N \alpha_k^2.
\]
Applying the low-frequency dominance assumption \eqref{eq:lowfreq_cond}, $\sum_{k=1}^N \alpha_k^2 \ge \beta \, \mathrm{Var}(\rho_\varepsilon)$, we obtain
\begin{equation}
\label{eq:energyin}
E_\varepsilon(c) \ge \frac{\beta}{2\lambda_N} \, \mathrm{Var}(\rho_\varepsilon).    
\end{equation}

Next, consider the eroded overlap region $A(c) = \mathcal{O}_\varepsilon(c)$. In this region the mollified density satisfies $\rho_\varepsilon(x) \ge 2$, so by Assumption~\ref{assum_a1}, the residual density $f_\varepsilon = \rho_\varepsilon - \bar\rho$ is bounded below:
\[
f_\varepsilon(x) \ge 2 - \bar\rho =: m > 0, \quad \forall x \in A(c).
\]
Thus, the variance satisfies
\[
\mathrm{Var}(\rho_\varepsilon) = \int_R f_\varepsilon^2(x)\, dx \ge \int_{A(c)} f_\varepsilon^2(x)\, dx \ge m^2 |A(c)|.
\]

Substituting the variance bound into the energy inequality \eqref{eq:energyin} yields
\[
E_\varepsilon(c) \ge \frac{\beta}{2\lambda_N} \, \mathrm{Var}(\rho_\varepsilon) \ge \frac{\beta m^2}{2\lambda_N} |A(c)| =: C |A(c)|,
\]
with $C = \frac{\beta(2-\bar\rho)^2}{2\lambda_N} > 0$.  
Finally, applying the eroded overlap area bound from Lemma~\ref{lem:areabound}, $|A(c)| \ge \big(|\mathcal O(c)| - \varepsilon P_\Sigma\big)_+$, we conclude
\[
E_\varepsilon(c) \ge C \, \big(|\mathcal O(c)| - \varepsilon P_\Sigma\big)_+,
\]
which completes the proof.
\end{proof}

Theorem~\ref{thm:quant_overlap} provides a characterization for using the mollified Poisson energy $E_\varepsilon(c)$ as a surrogate for geometric non-overlap. When the low-frequency components dominate the density deviation, $E_\varepsilon(c)$ gives a quantitative lower bound proportional to the actual overlap area, up to a small perimeter correction. This establishes a clear theoretical link between the continuous optimization landscape and the discrete non-overlap constraints inherent in floorplanning. However, if overlap occurs only on very fine scales (high-frequency modes), then $E_\varepsilon$ may be small despite a nontrivial geometric overlap. Hence Theorem \ref{thm:quant_overlap} applies to regimes where the overlap manifests as low-to-medium frequency density deviations (typical in early optimization stages), but not to all possible placements.

\paragraph{On the low-frequency dominance assumption.}
The key assumption of Theorem~\ref{thm:quant_overlap} is the low-frequency dominance condition \eqref{eq:lowfreq_cond}. Physically, the residual density $r(x;c)$ can be interpreted as an error signal, where:

\begin{itemize}
\item Low-frequency spectral components (small eigenvalues $\lambda_k$) correspond to \emph{large-scale, smooth density imbalances}, such as regional crowding. These are the primary targets of a global placement optimizer.

\item High-frequency components correspond to \emph{small-scale, sharp local fluctuations}, e.g., minor overlaps between neighboring modules in a nearly-optimal configuration.
\end{itemize}

The assumption asserts that a significant fraction $\beta$ of the total density variance arises from these large-scale imbalances. This is a realistic characterization of a layout in early or intermediate optimization stages. Consequently, the condition is not a restrictive limitation but rather specifies the regime where the Poisson energy is most effective.

\section{Gradient Descent for Mollified  Floorplanning}
\label{sec:pgd_mollified}

Building on the theoretical guarantees established in Section~\ref{sec:pef_convergence}, where the mollified Poisson energy $E_\varepsilon(c)$ was shown to provide both qualitative and quantitative measures of geometric overlap, we now describe an optimization framework for mollified fixed-outline floorplanning. The floorplanning task is formulated as the minimization of a composite objective that balances total wirelength $W(c)$ with the mollified Poisson energy:
\begin{equation}
\label{eq:mfof}
\min_{c \in \mathcal{C}_R} F_{\varepsilon, \lambda}(c) := W(c) + \lambda E_\varepsilon(c),
\end{equation}
where $\lambda \ge 0$ is a penalty parameter and $\mathcal{C}_R$ denotes the feasible set of module centers constrained to lie entirely within the floorplanning region $R$. Specifically, for each module $M_i$ with center $c_i=(x_i, y_i)$ and dimensions $(w_i, h_i)$, the feasible coordinates satisfy
\[
x_i \in [w_i/2, W - w_i/2], \quad y_i \in [h_i/2, H - h_i/2],
\]
where $[0, W]\times [0,H]$ defines the rectangular domain $R$.

The smoothness of $E_\varepsilon(c)$ guarantees that $F_{\varepsilon, \lambda}(c)$ is continuously differentiable, thereby enabling the use of gradient-based optimization methods. We adopt Projected Gradient Descent (PGD) to handle the boundary constraints:
\begin{equation}
\label{eq:pgd}
c^{k+1} = \Pi_{\mathcal{C}_R}\big(c^k - \eta_k \nabla F_{\varepsilon, \lambda}(c^k)\big),
\end{equation}
where $\eta_k>0$ is the step size and $\Pi_{\mathcal{C}_R}$ denotes the Euclidean projection onto $\mathcal{C}_R$. This update guarantees that all iterates remain feasible while descending the objective.

\subsection{Convergence}
\label{sec:convergence}
Having established in the previous section that the mollified Poisson energy $E_\varepsilon(c)$ provides both qualitative and quantitative guarantees against overlap, we now turn to the convergence behavior of the optimization algorithm used to minimize the composite objective $F_{\varepsilon,\lambda}(c)$. In particular, we analyze the theoretical convergence properties of the PGD scheme introduced in \eqref{eq:pgd}.

The convergence of PGD for smooth objectives over convex feasible sets is well-established in optimization theory. For the mollified floorplanning objective \eqref{eq:mfof}, where $F_{\varepsilon,\lambda}(c)$ is continuously differentiable with an $L$-Lipschitz gradient and $\mathcal{C}_R$ is closed and convex, it follows that if the step sizes $\eta_k$ satisfy the Robbins--Monro conditions ($\sum_k \eta_k = \infty$ and $\sum_k \eta_k^2 < \infty$), any limit point of $\{c^k\}$ is a stationary point.

Moreover, with a fixed step size $\eta \le 1/L$, PGD achieves a sublinear convergence rate $\mathcal{O}(1/K)$ in the squared norm of the gradient mapping, even when $F_{\varepsilon,\lambda}$ is nonconvex. See, e.g., \cite{Beck2017} for a detailed exposition of these results.

\vspace{0.5em}
Beyond global asymptotic guarantees, one can further establish a local \emph{linear convergence rate} around a strict local minimizer, as formalized below.
\begin{theorem}[Local Linear Convergence]
\label{thm:local_linear_convergence}
Assume that $c^*$ is a non-overlapping placement and a strict local minimizer of the objective $F_{\varepsilon,\lambda}(c)$ lying in the interior of the feasible domain $\mathcal{C}_R$. Furthermore, suppose the Hessian of $F_{\varepsilon,\lambda}$ at $c^*$ is positive definite:
\[
\nabla^2 F_{\varepsilon,\lambda}(c^*) \succeq \mu I, \quad \text{for some } \mu > 0.
\]
Let $L$ denote the local Lipschitz constant of $\nabla F_{\varepsilon,\lambda}$ in a neighborhood of $c^*$.

Then, there exists a radius $r > 0$ such that for any initialization $c^0 \in B_r(c^*) = \{c \mid \|c - c^*\| \le r\}$, the PGD iterates with fixed step size $\eta \le 1/L$ converge linearly to $c^*$:
\[
\|c^{k+1} - c^*\| \le \rho \, \|c^k - c^*\|, 
\quad \text{where } 
\rho = \max\{|1-\eta \mu|, |1-\eta L|\} < 1 \text{ for any } 0 < \eta \le 1/L.
\]
\end{theorem}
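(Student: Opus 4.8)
The plan is to show that, in a small enough ball around $c^*$, the projection in the PGD update \eqref{eq:pgd} is inactive, so the iteration reduces to plain gradient descent $T(c) := c - \eta\nabla F_{\varepsilon,\lambda}(c)$, and then to prove that $T$ is a $\rho$-contraction toward $c^*$ on that ball. First I would record the two facts that make $c^*$ a well-behaved fixed point: since $c^*$ is a local minimizer lying in the interior of $\mathcal{C}_R$, first-order optimality gives $\nabla F_{\varepsilon,\lambda}(c^*)=0$, hence $T(c^*)=c^*$; and since $c^*$ is interior, there is $\delta>0$ with $B_\delta(c^*)\subset\mathcal{C}_R$, so that projection behaves as the identity on $B_\delta(c^*)$.

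Next I would use the regularity of $F_{\varepsilon,\lambda}$ to fix a working neighborhood. The hypothesis $\nabla^2 F_{\varepsilon,\lambda}(c^*)\succeq\mu I$ together with continuity of the Hessian yields a radius $r\in(0,\delta]$ on which $\mu I\preceq\nabla^2 F_{\varepsilon,\lambda}(c)\preceq L I$ for every $c\in B_r(c^*)$, the upper bound being exactly the local $L$-Lipschitz property of the gradient. The central computation is then the integral form of the gradient increment,
\[
\nabla F_{\varepsilon,\lambda}(c)-\nabla F_{\varepsilon,\lambda}(c^*)=H(c)\,(c-c^*),\qquad H(c):=\int_0^1\nabla^2 F_{\varepsilon,\lambda}\big(c^*+t(c-c^*)\big)\,\de t,
\]
which is valid because the segment $[c^*,c]$ stays in the convex ball $B_r(c^*)$. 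Since $H(c)$ is an average of matrices obeying the two-sided bound, it inherits $\mu I\preceq H(c)\preceq L I$, so
\[
T(c)-c^*=(I-\eta H(c))(c-c^*),\qquad \|I-\eta H(c)\|_2\le\max\{|1-\eta\mu|,|1-\eta L|\}=\rho,
\]
the eigenvalues of $I-\eta H(c)$ lying in $[1-\eta L,\,1-\eta\mu]$. This gives $\|T(c)-c^*\|\le\rho\,\|c-c^*\|$ with $\rho<1$ for every $0<\eta\le1/L$.

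Finally I would run an induction to convert the contraction into the claimed linear rate while simultaneously verifying that the projection never activates. If $c^k\in B_r(c^*)$, then the contraction bound gives $\|T(c^k)-c^*\|\le\rho\|c^k-c^*\|\le\rho r<r\le\delta$, so the unprojected step $T(c^k)$ already lies in $B_\delta(c^*)\subset\mathcal{C}_R$; hence $\Pi_{\mathcal{C}_R}$ acts as the identity and $c^{k+1}=T(c^k)$. This both establishes $\|c^{k+1}-c^*\|\le\rho\|c^k-c^*\|$ and keeps $c^{k+1}\in B_r(c^*)$, closing the induction and yielding linear convergence with rate $\rho$.

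The main obstacle is the interplay between nonconvexity and the projection: because $F_{\varepsilon,\lambda}$ is only locally well-conditioned, I cannot invoke a global contraction and must argue that the iterates never escape $B_r(c^*)$ (so the Hessian bound and the segment argument remain valid) and that the boundary constraint stays inactive throughout. A secondary technical point is upgrading the pointwise hypothesis $\nabla^2 F_{\varepsilon,\lambda}(c^*)\succeq\mu I$ to a uniform lower bound on $B_r(c^*)$; strictly, continuity of the Hessian yields the bound $\mu' I$ for any $\mu'<\mu$, so one either reads $\mu$ as a uniform local modulus on the neighborhood or states the rate with $\mu'$ taken arbitrarily close to $\mu$.
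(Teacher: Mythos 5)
Your proposal is correct. Note that the paper itself supplies no proof of this theorem --- it simply declares the result standard and cites Nesterov and Bubeck --- so there is no in-paper argument to compare against line by line. Your argument is precisely the standard one those references contain for gradient descent under a local two-sided Hessian bound ($\mu I\preceq H(c)\preceq LI$, the integral mean-value representation $\nabla F_{\varepsilon,\lambda}(c)-\nabla F_{\varepsilon,\lambda}(c^*)=H(c)(c-c^*)$, and the spectral estimate $\|I-\eta H(c)\|_2\le\max\{|1-\eta\mu|,|1-\eta L|\}$), and you correctly add the two details that the citation glosses over in the \emph{projected} setting: the inductive argument that the iterates remain in $B_r(c^*)$, so the segment argument and the Hessian bounds stay valid, and the verification that $T(c^k)$ lands inside $B_\delta(c^*)\subset\mathcal{C}_R$, so the projection is the identity and PGD coincides with unconstrained gradient descent near the interior minimizer $c^*$. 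Your closing caveat is also well taken: since the theorem only hypothesizes $\nabla^2 F_{\varepsilon,\lambda}(c^*)\succeq\mu I$ at the point, continuity of the Hessian (available here because $F_{\varepsilon,\lambda}$ is $C^2$ by the mollification) yields $\mu' I$ on the ball only for $\mu'<\mu$, so the stated rate should strictly be read either with $\mu$ as a uniform modulus on the neighborhood or with $\mu'$ arbitrarily close to $\mu$ --- a looseness inherited from the theorem statement itself, not a flaw in your argument.
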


This is a standard result in nonconvex optimization. For a detailed proof, see, e.g., \cite{Bubeck2015, Nesterov2004}.

The convergence of PGD provides a theoretical justification for employing the PGD algorithm as a \emph{local refinement and repair} mechanism. It shows that if an initial floorplan, possibly produced by a fast heuristic, starts sufficiently close to a feasible, non-overlapping configuration (i.e., within its basin of attraction), then PGD is guaranteed to converge linearly to that valid solution. This result highlights the role of PGD as a reliable ``final-mile'' optimizer, capable of efficiently removing small residual overlaps and locally polishing an otherwise high-quality placement.

\subsection{Overlap Bound for Stationary Points}
\label{sec:overlapbound}

The convergence analysis in Section~\ref{sec:convergence} ensures that the Projected Gradient Descent (PGD) algorithm converges to a stationary point of the mollified objective $F_{\varepsilon, \lambda}(c)$. However, convergence alone does not guarantee \emph{physical feasibility}: a stationary point may still exhibit small overlaps. To bridge this gap, the following theorem establishes a quantitative upper bound on the overlap area of any ``good'' stationary point, one that achieves an objective value comparable to the optimal non-overlapping solution.

\begin{theorem}[Overlap Bound for Stationary Points]
\label{thm:overlap_bound_final}
Let $c^*$ be a stationary point of the objective function 
\[
F_{\varepsilon, \lambda}(c) = W(c) + \lambda E_\varepsilon(c).
\]
Assume its objective value does not exceed that of an optimal non-overlapping solution $c^{\mathrm{opt}}$, i.e.,
\[
F_{\varepsilon, \lambda}(c^*) \le F_{\varepsilon, \lambda}(c^{\mathrm{opt}}).
\]
Moreover, suppose $c^*$ satisfies the \textbf{low-frequency dominance condition} of Theorem~\ref{thm:quant_overlap}. Then, the geometric overlap of $c^*$ is bounded by
\[
|\mathcal{O}(c^*)| \le \varepsilon P_\Sigma + \frac{1}{C} \left( E_\varepsilon(c^{\mathrm{opt}}) + \frac{1}{\lambda} \Delta W \right),
\]
where $P_\Sigma$ is the total perimeter of all modules, $C$ is the constant from Theorem~\ref{thm:quant_overlap}, and $\Delta W := W(c^{\mathrm{opt}}) - \inf_{c'} W(c')$ denotes the non-negative ``wirelength cost of non-overlap.''
\end{theorem}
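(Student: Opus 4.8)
The plan is to chain the quantitative overlap bound of Theorem~\ref{thm:quant_overlap} together with the near-optimality hypothesis, using the Poisson energy $E_\varepsilon(c^*)$ as the pivot between the geometric overlap area and the objective values. The key structural observation is that Theorem~\ref{thm:quant_overlap} supplies a \emph{lower} bound on the energy in terms of overlap, whereas the hypothesis $F_{\varepsilon,\lambda}(c^*)\le F_{\varepsilon,\lambda}(c^{\mathrm{opt}})$ supplies an \emph{upper} bound on the energy in terms of the wirelength gap. Combining the two eliminates $E_\varepsilon(c^*)$ and isolates $|\mathcal O(c^*)|$ on the left.

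First I would apply Theorem~\ref{thm:quant_overlap} to the stationary point $c^*$, which is legitimate since by hypothesis $c^*$ satisfies the low-frequency dominance condition and $\bar\rho<2$ by Assumption~\ref{assum_a1}. This yields $E_\varepsilon(c^*)\ge C\,(|\mathcal O(c^*)|-\varepsilon P_\Sigma)_+$ with $C>0$. Since $(t)_+\ge t$ for every real $t$ and $E_\varepsilon(c^*)\ge 0$, I would note that whether or not $|\mathcal O(c^*)|-\varepsilon P_\Sigma$ is positive, the inequality $|\mathcal O(c^*)|-\varepsilon P_\Sigma\le \tfrac1C E_\varepsilon(c^*)$ holds, so that
\[
|\mathcal O(c^*)| \le \varepsilon P_\Sigma + \tfrac1C\, E_\varepsilon(c^*).
\]

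Next I would bound $E_\varepsilon(c^*)$ from above. Rearranging the near-optimality hypothesis $W(c^*)+\lambda E_\varepsilon(c^*)\le W(c^{\mathrm{opt}})+\lambda E_\varepsilon(c^{\mathrm{opt}})$ and dividing by $\lambda>0$ gives
\[
E_\varepsilon(c^*)\le E_\varepsilon(c^{\mathrm{opt}}) + \tfrac1\lambda\big(W(c^{\mathrm{opt}})-W(c^*)\big).
\]
Because $W(c^*)\ge \inf_{c'}W(c')$, the wirelength gap satisfies $W(c^{\mathrm{opt}})-W(c^*)\le W(c^{\mathrm{opt}})-\inf_{c'}W(c')=\Delta W$, whence $E_\varepsilon(c^*)\le E_\varepsilon(c^{\mathrm{opt}})+\tfrac1\lambda\Delta W$. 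Substituting this into the overlap estimate above produces exactly
\[
|\mathcal O(c^*)|\le \varepsilon P_\Sigma + \tfrac1C\Big(E_\varepsilon(c^{\mathrm{opt}}) + \tfrac1\lambda\Delta W\Big).
\]

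There is no substantial analytic obstacle here; the argument is a two-sided squeeze on the energy, and the only points requiring care are bookkeeping in nature. I expect the two subtleties to be (i) discarding the positive-part truncation $(\cdot)_+$ in the direction that preserves the inequality, handled by the trivial split into the two sign cases above, and (ii) recognizing that replacing $W(c^*)$ by $\inf_{c'}W(c')$ can only enlarge the right-hand side, which is precisely what causes $\Delta W$ — rather than the configuration-dependent and possibly negative quantity $W(c^{\mathrm{opt}})-W(c^*)$ — to appear in the final bound. The strict positivity of $C$, guaranteed by Theorem~\ref{thm:quant_overlap} under Assumption~\ref{assum_a1}, is what legitimizes the division by $C$ in the first step.
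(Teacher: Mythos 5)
Your proposal is correct and follows essentially the same route as the paper's own proof: both combine the energy upper bound $E_\varepsilon(c^*) \le E_\varepsilon(c^{\mathrm{opt}}) + \tfrac{1}{\lambda}\Delta W$ (obtained from near-optimality plus $W(c^*) \ge \inf_{c'} W(c')$) with the lower bound of Theorem~\ref{thm:quant_overlap}, differing only in the order in which the two halves of the squeeze are established. Your explicit sign-case treatment of the positive part $(\cdot)_+$ is, if anything, slightly more careful than the paper's terse ``dropping the $(\cdot)_+$ operator.''
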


\begin{proof}


First, by the theorem's premise, $F_{\varepsilon, \lambda}(c^*) \le F_{\varepsilon, \lambda}(c^{\mathrm{opt}})$. Expanding this using the objective function's definition gives:
\[
W(c^*) + \lambda E_\varepsilon(c^*) \le W(c^{\mathrm{opt}}) + \lambda E_\varepsilon(c^{\mathrm{opt}}).
\]
Rearranging for the energy term, we find:
\begin{equation} \label{eq:energy_upper_bound_final_revised}
\lambda E_\varepsilon(c^*) \le \lambda E_\varepsilon(c^{\mathrm{opt}}) + W(c^{\mathrm{opt}}) - W(c^*).
\end{equation}
To make this bound independent of the solution $c^*$, we note that any wirelength $W(c^*)$ must be greater than or equal to the global infimum of the wirelength function, $W_{\min} := \inf_{c'} W(c')$. This implies $-W(c^*) \le -W_{\min}$. Substituting this into \eqref{eq:energy_upper_bound_final_revised} yields:
\begin{align*}
\lambda E_\varepsilon(c^*) &\le \lambda E_\varepsilon(c^{\mathrm{opt}}) + W(c^{\mathrm{opt}}) - W_{\min}.
\end{align*}
Defining the non-negative constant $\Delta W := W(c^{\mathrm{opt}}) - W_{\min}$ and dividing by $\lambda$, we obtain an upper bound for the energy at the stationary point:
\[
E_\varepsilon(c^*) \le E_\varepsilon(c^{\mathrm{opt}}) + \frac{1}{\lambda} \Delta W.
\]


Then by the theorem's premise, the low-frequency dominance condition holds for $c^*$. Therefore, we can apply the linear lower bound from Theorem~\ref{thm:quant_overlap}:
\[
E_\varepsilon(c^*) \ge C \big(|\Ocal(c^*)| - \varepsilon P_\Sigma \big)_+.
\]
Combining the upper and lower bounds on $E_\varepsilon(c^*)$ gives:
\[
C \big(|\Ocal(c^*)| - \varepsilon P_\Sigma \big)_+ \le E_\varepsilon(c^*) \le E_\varepsilon(c^{\mathrm{opt}}) + \frac{1}{\lambda} \Delta W.
\]
We now solve this inequality for $|\Ocal(c^*)|$. Dividing by the positive constant $C$ and dropping the $(\cdot)_+$ operator, we get:
\[
|\Ocal(c^*)| - \varepsilon P_\Sigma \le \frac{1}{C} \left( E_\varepsilon(c^{\mathrm{opt}}) + \frac{1}{\lambda} \Delta W \right).
\]
Rearranging the terms yields the final linear bound on the geometric overlap:
\[
|\Ocal(c^*)| \le \varepsilon P_\Sigma + \frac{1}{C} \left( E_\varepsilon(c^{\mathrm{opt}}) + \frac{1}{\lambda} \Delta W \right).
\]
This completes the proof.
\end{proof}

Theorem~\ref{thm:overlap_bound_final} establishes that any sufficiently good stationary point of the smoothed objective must correspond to a nearly non-overlapping layout. The bound scales linearly with both the smoothing parameter $\varepsilon$ and the penalty weight $\lambda^{-1}$, indicating that higher spatial resolution (smaller $\varepsilon$) or stronger penalization (larger $\lambda$) both tighten the feasible basin. Intuitively, this result shows that minimizing $F_{\varepsilon,\lambda}$ not only drives the Poisson energy down but also directly limits the geometric overlap, thus ensuring the physical plausibility of the converged solution.

\begin{corollary}[Asymptotic Consistency Under Low-Frequency Dominance]
\label{cor:asymptotic_consistency_linear}
Let $c^*$ be a stationary point of $F_{\varepsilon, \lambda}(c) = W(c) + \lambda E_\varepsilon(c)$ satisfying the assumptions of Theorem~\ref{thm:quant_overlap} for each $\varepsilon > 0$.  
Then, as the smoothing parameter vanishes ($\varepsilon \to 0$), the residual geometric overlap of $c^*$ is bounded by
\[
\limsup_{\varepsilon \to 0} |\Ocal(c^*)| 
\le \frac{1}{C} \left( E_0(c^{\mathrm{opt}}) + \frac{1}{\lambda} \Delta W \right),
\]
where $E_0(c^{\mathrm{opt}})$ is the Poisson energy of the true (unsmoothed) density of the optimal non-overlapping layout $c^{\mathrm{opt}}$.  
\end{corollary}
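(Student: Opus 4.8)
The plan is to obtain the corollary as a limiting form of the non-asymptotic bound already proved in Theorem~\ref{thm:overlap_bound_final}. For each fixed $\varepsilon>0$ that theorem gives
\[
|\Ocal(c^*)| \le \varepsilon P_\Sigma + \frac{1}{C}\left( E_\varepsilon(c^{\mathrm{opt}}) + \frac{1}{\lambda}\Delta W \right),
\]
and I would simply take $\limsup_{\varepsilon\to0}$ of both sides. On the right-hand side only two quantities carry $\varepsilon$-dependence: the perimeter correction $\varepsilon P_\Sigma$, which tends to $0$ since $P_\Sigma$ is a fixed finite total perimeter, and the mollified energy $E_\varepsilon(c^{\mathrm{opt}})$. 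The term $\tfrac{1}{\lambda}\Delta W$ and the constant $C=\beta(2-\bar\rho)^2/(2\lambda_N)$ are independent of $\varepsilon$. Hence the whole argument reduces to establishing the single convergence $E_\varepsilon(c^{\mathrm{opt}}) \to E_0(c^{\mathrm{opt}})$.

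To prove this convergence, first I would show that the mollified residual density converges in $L^2$. For the fixed layout $c^{\mathrm{opt}}$ write $f_\varepsilon = \rho_\varepsilon(\cdot;c^{\mathrm{opt}})-\bar\rho$ and $r=\rho(\cdot;c^{\mathrm{opt}})-\bar\rho$. Each mollified indicator obeys $\psi_i^\varepsilon=\eta_\varepsilon*\mathbf{1}_{M_i}\to \mathbf{1}_{M_i}$ in $L^2(\mathbb{R}^2)$ by the standard approximation-by-mollification property, which applies because each $M_i$ is bounded and so $\mathbf{1}_{M_i}\in L^2$. Summing over the finitely many modules, and noting that $\bar\rho$ depends only on the fixed areas $A_i$ and is therefore unchanged by mollification (cf.\ Lemma~\ref{lem:zeromean}), yields $f_\varepsilon\to r$ in $L^2_0(R)$; in particular $\norm{f_\varepsilon}_{L^2}$ stays bounded.

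Next I would use continuity of the Poisson energy as a quadratic form. Since $E_\varepsilon(c^{\mathrm{opt}})=\tfrac12\langle f_\varepsilon, G f_\varepsilon\rangle$ with $G=(-\Delta_N)^{-1}$ a self-adjoint bounded operator of norm $1/\lambda_1$ on $L^2_0(R)$, the bilinear decomposition $\langle f_\varepsilon, G f_\varepsilon\rangle-\langle r, G r\rangle = \langle f_\varepsilon-r, G f_\varepsilon\rangle+\langle r, G(f_\varepsilon-r)\rangle$ combined with Cauchy--Schwarz gives $|E_\varepsilon(c^{\mathrm{opt}})-E_0(c^{\mathrm{opt}})|\le \tfrac{1}{2\lambda_1}\norm{f_\varepsilon-r}_{L^2}\big(\norm{f_\varepsilon}_{L^2}+\norm{r}_{L^2}\big)$, which tends to $0$ by the previous step. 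Substituting $\lim_{\varepsilon\to0}E_\varepsilon(c^{\mathrm{opt}})=E_0(c^{\mathrm{opt}})$ and $\varepsilon P_\Sigma\to0$ into the $\limsup$ of the displayed bound then produces the claimed inequality.

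The main subtlety I expect to address is that the stationary point $c^*$ changes with $\varepsilon$, so the left-hand side $|\Ocal(c^*)|$ is really a varying sequence; this is harmless precisely because the right-hand side of Theorem~\ref{thm:overlap_bound_final} is \emph{independent of $c^*$}, so the $\limsup$ bound holds uniformly over whichever stationary points arise. A second point to keep explicit is that $C$ must be treated as a single $\varepsilon$-independent constant, which requires the low-frequency dominance hypothesis to hold with a common pair $(\beta,N)$ for all small $\varepsilon$; this is exactly what ``satisfying the assumptions of Theorem~\ref{thm:quant_overlap} for each $\varepsilon>0$'' is taken to supply. Finally, I would remark that $E_0(c^{\mathrm{opt}})$ is finite and well defined because $r\in L^2_0(R)$ and $G$ is bounded, so the limiting object on the right-hand side is meaningful.
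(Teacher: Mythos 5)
Your proposal is correct and follows essentially the same route as the paper: take the $\limsup$ of the non-asymptotic bound of Theorem~\ref{thm:overlap_bound_final}, observe that $\varepsilon P_\Sigma \to 0$ while $C$ and $\tfrac{1}{\lambda}\Delta W$ are $\varepsilon$-independent, and pass to the limit in $E_\varepsilon(c^{\mathrm{opt}})$. The only difference is that where the paper simply invokes ``standard properties of mollifiers'' for $E_\varepsilon(c^{\mathrm{opt}}) \to E_0(c^{\mathrm{opt}})$, you supply the argument explicitly (via $L^2$-convergence of $f_\varepsilon$ to $r$ and boundedness of $G$ on $L^2_0(R)$) and also make explicit the uniformity points the paper leaves tacit, which is a filling-in of detail rather than a different approach.
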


\begin{proof}
The proof starts from the linear quantitative overlap bound established in Theorem~\ref{thm:quant_overlap}:
\[
|\Ocal(c^*)| \le \varepsilon P_\Sigma + \frac{1}{C} \left( E_\varepsilon(c^{\mathrm{opt}}) + \frac{1}{\lambda} \Delta W \right).
\]
To find the asymptotic behavior, we take the limit superior ($\limsup$) of both sides as $\varepsilon \to 0$.

The first term on the right-hand side vanishes linearly:
\[
\lim_{\varepsilon \to 0} \varepsilon P_\Sigma = 0.
\]
For the second term, the constants $C$ and $\frac{1}{\lambda}\Delta W$ are independent of $\varepsilon$. By the standard properties of mollifiers, the mollified energy converges to the energy of the true density as the smoothing radius vanishes:
\[
\lim_{\varepsilon \to 0} E_\varepsilon(c^{\mathrm{opt}}) = E_0(c^{\mathrm{opt}}).
\]
Since addition and scalar multiplication are continuous operations, we can combine these limits to obtain the asymptotic bound:
\begin{align*}
\limsup_{\varepsilon \to 0} |\Ocal(c^*)| &\le \limsup_{\varepsilon \to 0} \left( \varepsilon P_\Sigma + \frac{1}{C} \left( E_\varepsilon(c^{\mathrm{opt}}) + \frac{1}{\lambda} \Delta W \right) \right) \\
&= \left(\lim_{\varepsilon \to 0} \varepsilon P_\Sigma\right) + \frac{1}{C} \left( \left(\lim_{\varepsilon \to 0} E_\varepsilon(c^{\mathrm{opt}})\right) + \frac{1}{\lambda} \Delta W \right) \\
&= 0 + \frac{1}{C} \left( E_0(c^{\mathrm{opt}}) + \frac{1}{\lambda} \Delta W \right).
\end{align*}
This completes the proof.
\end{proof}

Corollary~\ref{cor:asymptotic_consistency_linear} formalizes the asymptotic reliability of the PeF model: under mild spectral conditions, the stationary points converge to geometrically valid layouts as $\varepsilon \to 0$. This ensures that the smoothed formulation remains faithful to the original discrete floorplanning objective, even in the continuum limit.

Overall, the theoretical analysis provides two complementary guarantees for the PeF formulation: \textbf{local refinement} and \textbf{asymptotic consistency}.  
The local linear convergence property (Theorem \ref{thm:local_linear_convergence}) establishes PeF as a reliable ``final-mile'' optimizer, ensuring rapid convergence to a valid, non-overlapping layout when initialized near a high-quality solution.  
The asymptotic consistency result (Corollary~\ref{cor:asymptotic_consistency_linear}) complements this by showing that as the smoothing parameter $\varepsilon$ decreases, the residual overlap is quantitatively bounded by intrinsic properties of the ideal solution, namely, its unsmoothed Poisson energy $E_0(c^{\mathrm{opt}})$ and the wirelength cost of non-overlap $\Delta W$.

\subsection{Wire Length Sub-optimality Bound}

While the previous theorems provide guarantees on obtaining a physically feasible (low-overlap) solution, it is equally important to ensure the quality of this solution in terms of wire length. In particular, we wish to verify that minimizing the smoothed objective $F_{\varepsilon, \lambda}(c)$ does not significantly degrade the primary design metric $W(c)$. 
The following theorem provides a sharp quantitative bound on this sub-optimality.

\begin{theorem}[Wire Length Sub-optimality Bound]
\label{thm:wirelength_bound}
Let $c^*$ be a stationary point of the objective function $F_{\varepsilon, \lambda}(c) = W(c) + \lambda E_\varepsilon(c)$, satisfying the condition $F_{\varepsilon, \lambda}(c^*) \le F_{\varepsilon, \lambda}(c^{\mathrm{opt}})$ for an optimal non-overlapping solution $c^{\mathrm{opt}}$.
Then, the wire length sub-optimality of the solution $c^*$ is bounded by the energy of the ideal solution:
\[
W(c^*) - W(c^{\mathrm{opt}}) \le \lambda E_\varepsilon(c^{\mathrm{opt}}).
\]
\end{theorem}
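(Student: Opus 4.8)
The proof of this wire length sub-optimality bound follows directly from the defining inequality and the non-negativity of the Poisson energy. The plan is to start from the hypothesis $F_{\varepsilon, \lambda}(c^*) \le F_{\varepsilon, \lambda}(c^{\mathrm{opt}})$ and expand both sides using the definition of the composite objective $F_{\varepsilon, \lambda}(c) = W(c) + \lambda E_\varepsilon(c)$. This immediately yields
\[
W(c^*) + \lambda E_\varepsilon(c^*) \le W(c^{\mathrm{opt}}) + \lambda E_\varepsilon(c^{\mathrm{opt}}).
\]

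First I would isolate the wire length gap by rearranging this inequality to obtain
\[
W(c^*) - W(c^{\mathrm{opt}}) \le \lambda E_\varepsilon(c^{\mathrm{opt}}) - \lambda E_\varepsilon(c^*).
\]
The key step is then to invoke the non-negativity of the mollified Poisson energy. Since $E_\varepsilon(c)$ is defined via the spectral representation $E_\varepsilon(c) = \tfrac{1}{2}\sum_{k \ge 1} \alpha_k^2/\lambda_k$ with all $\lambda_k > 0$, or equivalently as $\tfrac12\int_R |\nabla \phi_\varepsilon|^2\,\de x \ge 0$ (the analogue of Equation~\eqref{eq:ecnonnegative} for the mollified energy), we have $E_\varepsilon(c^*) \ge 0$. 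Dropping the non-positive term $-\lambda E_\varepsilon(c^*)$ from the right-hand side (since $\lambda \ge 0$) preserves the inequality and gives
\[
W(c^*) - W(c^{\mathrm{opt}}) \le \lambda E_\varepsilon(c^{\mathrm{opt}}),
\]
which is the claimed bound.

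I do not anticipate any substantial obstacle here; the result is essentially an algebraic consequence of the optimality hypothesis combined with energy non-negativity, and no spectral truncation, low-frequency dominance, or geometric erosion arguments are required. The only point meriting a brief remark is that the bound is \emph{sharp} in the sense indicated by the theorem statement: equality is approached precisely when the stationary point achieves zero residual energy, i.e., when $c^*$ is itself (asymptotically) non-overlapping, so that the wire length penalty one pays for smoothing is controlled entirely by the irreducible energy $E_\varepsilon(c^{\mathrm{opt}})$ of the ideal configuration. This interpretation ties the bound back to the asymptotic consistency established in Corollary~\ref{cor:asymptotic_consistency_linear}, where $E_\varepsilon(c^{\mathrm{opt}}) \to E_0(c^{\mathrm{opt}})$ as $\varepsilon \to 0$.
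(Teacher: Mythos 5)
Your proposal is correct and follows essentially the same argument as the paper's proof: expand the hypothesis $F_{\varepsilon,\lambda}(c^*) \le F_{\varepsilon,\lambda}(c^{\mathrm{opt}})$, rearrange to isolate $W(c^*) - W(c^{\mathrm{opt}})$, and drop the term $-\lambda E_\varepsilon(c^*)$ using $E_\varepsilon(c^*) \ge 0$. Your explicit justification of the non-negativity via the spectral representation (or the Dirichlet-energy form $\tfrac12\int_R |\nabla\phi_\varepsilon|^2\,\de x$) is a harmless elaboration of a fact the paper invokes without comment.
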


\begin{proof}
By the theorem's premise, the stationary point $c^*$ satisfies $F_{\varepsilon, \lambda}(c^*) \le F_{\varepsilon, \lambda}(c^{\mathrm{opt}})$. Expanding this inequality gives:
\[
W(c^*) + \lambda E_\varepsilon(c^*) \le W(c^{\mathrm{opt}}) + \lambda E_\varepsilon(c^{\mathrm{opt}}).
\]
Rearranging the terms to isolate the wire length sub-optimality, we obtain:
\[
W(c^*) - W(c^{\mathrm{opt}}) \le \lambda \big( E_\varepsilon(c^{\mathrm{opt}}) - E_\varepsilon(c^*) \big).
\]
Since the Poisson energy is always non-negative ($E_\varepsilon(c^*) \ge 0$), we can further relax the bound by dropping this term, which yields the final result:
\[
W(c^*) - W(c^{\mathrm{opt}}) \le \lambda E_\varepsilon(c^{\mathrm{opt}}).
\]
This completes the proof.
\end{proof}

The asymptotic behavior of this bound as the smoothing parameter vanishes is immediate.

\begin{corollary}[Asymptotic Wire Length Guarantee]
\label{cor:asymptotic_wirelength}
Let $\{c_\varepsilon^*\}_{\varepsilon>0}$ be a family of stationary points, each satisfying the condition of Theorem~\ref{thm:wirelength_bound}. Let $c_0^*$ be any limit point of this family as $\varepsilon \to 0$. The wire length sub-optimality of the limit point is bounded by:
\[
W(c_0^*) - W(c^{\mathrm{opt}}) \le \lambda E_0(c^{\mathrm{opt}}),
\]
where $E_0(c^{\mathrm{opt}})$ is the energy of the ideal, unsmoothed density.
\end{corollary}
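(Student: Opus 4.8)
The plan is to pass to the limit in the finite-scale estimate provided by Theorem~\ref{thm:wirelength_bound}, evaluated along a subsequence that realizes the chosen limit point. Fix a limit point $c_0^*$ of the family $\{c_\varepsilon^*\}_{\varepsilon>0}$ and select a sequence $\varepsilon_n \to 0$ with $c_{\varepsilon_n}^* \to c_0^*$; such a sequence exists by the very definition of a limit point. For every $n$ the iterate $c_{\varepsilon_n}^*$ satisfies the hypothesis of Theorem~\ref{thm:wirelength_bound}, so we immediately obtain the per-scale inequality
\[
W(c_{\varepsilon_n}^*) - W(c^{\mathrm{opt}}) \le \lambda\, E_{\varepsilon_n}(c^{\mathrm{opt}}).
\]
The corollary will then follow by letting $n \to \infty$ and treating the two sides independently.

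For the left-hand side I would invoke the continuity of the smoothed wirelength: $W$ is continuously differentiable in $c$ by construction of the smoothed HPWL, and $W(c^{\mathrm{opt}})$ is a constant independent of $n$, so $c_{\varepsilon_n}^* \to c_0^*$ forces $W(c_{\varepsilon_n}^*) - W(c^{\mathrm{opt}}) \to W(c_0^*) - W(c^{\mathrm{opt}})$. For the right-hand side the decisive ingredient is the mollifier-consistency of the Poisson energy, $E_\varepsilon(c^{\mathrm{opt}}) \to E_0(c^{\mathrm{opt}})$ as $\varepsilon \to 0$. This is precisely the convergence already used in the proof of Corollary~\ref{cor:asymptotic_consistency_linear}: the mollified density $\rho_\varepsilon(\cdot;c^{\mathrm{opt}})$ converges to the true density in $L^2(R)$ (the indicators lie in $L^2$ on the bounded domain $R$), the associated zero-mean residuals converge in $L^2_0(R)$, and since $E(c)=\tfrac12\langle r, G r\rangle$ is a continuous quadratic form in $r$ with $G$ bounded of operator norm $1/\lambda_1$, the energies converge. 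Since $c^{\mathrm{opt}}$ is fixed, $E_\varepsilon(c^{\mathrm{opt}})$ is a function of $\varepsilon$ alone, so this limit holds along the chosen subsequence. Passing to the limit in the displayed inequality then yields $W(c_0^*) - W(c^{\mathrm{opt}}) \le \lambda E_0(c^{\mathrm{opt}})$, as claimed.

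The only nontrivial step is the energy-convergence claim, since every other piece is an elementary limit of a continuous function of $\varepsilon$ or of $c$. The cleanest way to secure it is to control the difference by polarization, $|E_\varepsilon(c^{\mathrm{opt}}) - E_0(c^{\mathrm{opt}})| \le \tfrac12\,\|r_\varepsilon - r_0\|_{H^{-1}}\big(\|r_\varepsilon\|_{H^{-1}} + \|r_0\|_{H^{-1}}\big)$, and then to dominate the $H^{-1}$ norms by the $L^2$ norms through the continuous embedding $L^2_0(R) \hookrightarrow H^{-1}(R)$. With $\|r_\varepsilon - r_0\|_{L^2} \to 0$ from standard mollifier theory, the right-hand side vanishes and the energies converge. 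I expect this to be the main obstacle, not because it is deep but because it is the sole point where a quantitative analytic estimate, rather than mere continuity or non-negativity, is required.
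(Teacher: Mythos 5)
Your proposal is correct and follows essentially the same route as the paper, whose proof is the one-line observation that the result follows by taking $\varepsilon \to 0$ in Theorem~\ref{thm:wirelength_bound} using continuity of the wirelength and energy functions. You simply fill in the details the paper leaves implicit—extracting a subsequence realizing the limit point, invoking continuity of the smoothed $W$, and justifying $E_\varepsilon(c^{\mathrm{opt}}) \to E_0(c^{\mathrm{opt}})$ via $L^2$ mollifier convergence together with the bounded embedding $L^2_0(R) \hookrightarrow H^{-1}(R)$—which is a faithful elaboration rather than a different argument.
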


\begin{proof}
The result follows directly by taking the limit $\varepsilon \to 0$ in the inequality of Theorem~\ref{thm:wirelength_bound} and using the continuity of the wire length and energy functions.
\end{proof}

Theorem~\ref{thm:wirelength_bound} and Corollary~\ref{cor:asymptotic_wirelength} together quantify the fundamental trade-off governed by the penalty weight $\lambda$. 
The wire length deviation $W(c^*) - W(c^{\mathrm{opt}})$ scales linearly with $\lambda$: 
smaller values of $\lambda$ yield near-optimal wire length at the risk of mild overlaps, whereas larger values enforce stronger geometric separation at a potential cost in wire length. 
Hence, the parameter $\lambda$ provides a means to balance physical feasibility and design quality within the PeF framework.

\subsection{Algorithmic Stability}
\label{sec:stability}

In addition to convergence and consistency, a robust optimization formulation must also exhibit stability: small perturbations in problem data should not lead to disproportionate changes in the solution. For the mollified  floorplanning formulation, this requirement corresponds to robustness of the resulting placement with respect to small variations in module areas, which may arise due to design updates or modeling noise.

\begin{theorem}[Solution Stability]
\label{thm:solution_stability}
Let $c^*(A)$ denote a local minimizer of the mollified global floorplanning problem \eqref{eq:mfof} with objective 
\[
F_{\varepsilon,\lambda}(c, A) = W(c, A) + \lambda E_\varepsilon(c, A),
\]
for a given vector of module areas $A = (A_1, \dots, A_n)$. 
If $c^*(A)$ is a regular minimizer satisfying the \emph{second-order sufficiency condition (SOSC)}, then the solution map
\[
A \mapsto c^*(A)
\]
is locally Lipschitz continuous in a neighborhood of $A$.
\end{theorem}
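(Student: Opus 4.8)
The plan is to treat the first-order stationarity condition as an implicit equation in the pair $(c, A)$ and invoke the implicit function theorem, with the second-order sufficiency condition supplying exactly the invertibility needed to solve for $c$ as a function of $A$. Write $A_0$ for the base area vector. Since the mollified energy $E_\varepsilon$ is $C^\infty$ in the module centers by construction, and the smoothed wirelength $W$ is smooth, the objective $F_{\varepsilon,\lambda}(\cdot, A)$ is $C^2$ in $c$; moreover the dependence on $A$ enters only through smooth quantities (the average density $\bar\rho = \frac{1}{|R|}\sum_i A_i$ and the module dimensions), so $F_{\varepsilon,\lambda}$ is jointly $C^2$ in $(c, A)$. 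Because $c^*(A_0)$ is a regular interior minimizer, the box constraints impose no obstruction and the optimality condition reduces to the stationarity equation $\Phi(c, A) := \grad_c F_{\varepsilon,\lambda}(c, A) = 0$, which holds at $(c^*(A_0), A_0)$.

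First I would verify the two hypotheses of the implicit function theorem. For smoothness, the map $\Phi$ is $C^1$ near $(c^*(A_0), A_0)$ by the joint $C^2$ regularity just noted. For invertibility, the SOSC asserts that the Hessian $\grad^2_{cc} F_{\varepsilon,\lambda}(c^*(A_0), A_0)$ is positive definite, say $\grad^2_{cc} F_{\varepsilon,\lambda} \succeq \mu I$ for some $\mu > 0$; in particular the partial Jacobian $\partial_c \Phi = \grad^2_{cc} F_{\varepsilon,\lambda}$ is nonsingular at the base point. These are precisely the premises required to apply the theorem.

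The implicit function theorem then yields a neighborhood $U$ of $A_0$ and a unique $C^1$ map $A \mapsto c^*(A)$ on $U$ with $\Phi(c^*(A), A) = 0$ and $c^*(A_0)$ the given minimizer. Differentiating this identity gives the sensitivity relation
\[
\frac{\partial c^*}{\partial A} = -\big(\grad^2_{cc} F_{\varepsilon,\lambda}\big)^{-1}\, \grad^2_{cA} F_{\varepsilon,\lambda},
\]
so on a possibly smaller neighborhood the Jacobian of the solution map is bounded in operator norm by $\mu^{-1}\,\sup \norm{\grad^2_{cA} F_{\varepsilon,\lambda}}$. A uniform bound on this Jacobian is exactly local Lipschitz continuity of $A \mapsto c^*(A)$, which is the claim. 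Persistence of the minimizer property is free: positive definiteness of the Hessian is an open condition, so $c^*(A)$ remains a genuine local minimizer for $A$ near $A_0$ rather than a spurious stationary point.

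The main obstacle will be the honest treatment of feasibility when box constraints are active at $c^*(A_0)$. In that case the reduction to the plain stationarity equation $\Phi = 0$ fails and one must instead work with the full KKT system: I would assume strict complementarity together with the (automatically satisfied, for polyhedral box constraints) constraint qualification, replace the full Hessian by its restriction to the critical cone in the SOSC, and then invoke Robinson's strong regularity to obtain the same Lipschitz conclusion from a generalized implicit function theorem for variational inequalities. A secondary technical point is confirming joint differentiability of $\grad_c F_{\varepsilon,\lambda}$ in $(c,A)$ through the area-dependent module geometry, but this is guaranteed by the mollification, which smooths the indicator densities with respect to both the centers and the dimensions.
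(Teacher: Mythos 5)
Your proposal is correct and follows exactly the standard sensitivity-analysis route the paper itself invokes: the paper offers no proof of this theorem, deferring to Rockafellar--Wets, and its accompanying remark makes the same interior/boundary split you do (interior minimizers handled by the implicit function theorem with SOSC giving Hessian invertibility, boundary minimizers via SOSC restricted to the critical cone, which your appeal to strict complementarity and Robinson's strong regularity handles). The one detail worth adding is that the feasible set $\mathcal{C}_R$ itself depends on $A$ through the module dimensions $(w_i,h_i)$, so the generalized equation has a parameter-dependent constraint set; this is absorbed by a smooth $A$-dependent change of variables mapping the boxes to a fixed box, after which your argument applies verbatim.
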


This is a standard result of sensitivity analysis in optimization. For a rigorous treatment of sensitivity analysis for generalized equations, see, e.g., the foundational work by Rockafellar and Wets \cite{Rockafellar1998}.

\begin{remark}[Satisfaction of the SOSC in PeF]
In the mollified fixed-outline floorplanning (PeF) formulation, the objective function $F_{\varepsilon,\lambda}(c,A)$ is $C^2$-smooth with respect to both the module coordinates $c$ and the module areas $A$, due to the mollified Poisson energy and smooth wirelength terms. The feasible set $\mathcal{C}_R$ is convex and consists of simple box constraints. 

Consequently, any interior local minimizer of $F_{\varepsilon,\lambda}$ automatically satisfies the second-order sufficiency condition (SOSC), as the Hessian is positive definite in a neighborhood around such a point. For minimizers on the boundary of $\mathcal{C}_R$, SOSC requires only positive definiteness of the Hessian restricted to the critical cone, which holds generically in practical floorplanning scenarios, especially when the layout has no modules exactly abutting the boundaries in a degenerate way.

Therefore, the assumption that $c^*(A)$ is a \emph{regular minimizer satisfying SOSC} is both natural and reasonable in the PeF context, particularly for stationary points reached by gradient-based or heuristic-initialized optimization algorithms.
\end{remark}

Complementing the guarantees of convergence (Section~\ref{sec:convergence}) and asymptotic consistency (Corollary~\ref{cor:asymptotic_consistency_linear}), the stability analysis further establishes that the PeF formulation is robust with respect to data uncertainty.
In practice, this means that small variations in module sizes induce only small, controlled changes in the optimized floorplan, preventing discontinuous or erratic behavior during iterative design refinement. This property is essential for integration into hierarchical or incremental design flows, where module dimensions evolve over time.

\section{Non-Locality of Poisson-based Forces and Optimal Transport }
\label{sec:nonlocality}

Building on the theoretical guarantees of the PeF formulation, local refinement (Section~\ref{sec:convergence}) and asymptotic consistency (Corollary~\ref{cor:asymptotic_consistency_linear}), we analyze why PeF effectively resolves large-scale density imbalances. We first show that the placement forces induced by the mollified Poisson energy are inherently \emph{non-local}, in contrast to purely local penalties. We then interpret PeF dynamics as a \emph{Wasserstein-2 gradient flow}, providing a principled global optimization perspective.

\subsection{Local vs Non-Local Placement Forces}
\label{sec:penalty_comparison} 

The Poisson energy
\[
E_\varepsilon(c) = \frac12 \int_R \phi_\varepsilon f_\varepsilon \, \mathrm{d}x
\]
induces placement forces that depend on the positions of all modules, whereas simpler local penalties, such as the mollified variance
\[
\mathrm{Var}_\varepsilon(c) := \int_R f_\varepsilon(x;c)^2 \, \mathrm{d}x,
\]
generate forces that respond only to nearby density.

\begin{theorem}[Local vs Non-Local Forces]
Let
\[
F_i^{\mathrm{Poisson}} = -\frac{\partial E_\varepsilon}{\partial c_i}, \qquad 
F_i^{\mathrm{Var}} = -\frac{\partial \mathrm{Var}_\varepsilon}{\partial c_i}.
\]
Then $F_i^{\mathrm{Poisson}}$ is non-local, depending on all modules, while $F_i^{\mathrm{Var}}$ is local, depending only on nearby density.
\end{theorem}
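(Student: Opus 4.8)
The plan is to compute both force fields explicitly and contrast their functional dependence on the module positions. For each module $i$, I would differentiate the energy with respect to the center $c_i$ by first noting that the density $f_\varepsilon(x;c) = \sum_j \psi_j^\varepsilon(x - c_j) - \bar\rho$ depends on $c_i$ only through the single translated mollified indicator $\psi_i^\varepsilon(x - c_i)$, so that $\partial f_\varepsilon / \partial c_i = -\nabla \psi_i^\varepsilon(x - c_i)$. This reduces both computations to understanding how the respective energies respond to a localized perturbation of the density.

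For the variance, the chain rule gives
\[
F_i^{\mathrm{Var}} = -\frac{\partial}{\partial c_i}\int_R f_\varepsilon^2 \,\mathrm{d}x = 2\int_R f_\varepsilon(x;c)\,\nabla\psi_i^\varepsilon(x - c_i)\,\mathrm{d}x.
\]
Since $\nabla\psi_i^\varepsilon$ is supported in an $O(\varepsilon)$-neighborhood of $\partial M_i$, the integrand vanishes outside a thin collar around module $i$; the force therefore samples $f_\varepsilon$ only on the support of module $i$, making it manifestly \emph{local}. For the Poisson energy, I would use the self-adjointness of $G = (-\Delta_N)^{-1}$ together with the identity $E_\varepsilon = \tfrac12\langle f_\varepsilon, G f_\varepsilon\rangle$ to obtain
\[
F_i^{\mathrm{Poisson}} = -\langle G f_\varepsilon, \partial_{c_i} f_\varepsilon\rangle = \int_R \phi_\varepsilon(x;c)\,\nabla\psi_i^\varepsilon(x - c_i)\,\mathrm{d}x,
\]
where $\phi_\varepsilon = G f_\varepsilon$. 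The crucial difference is that $\phi_\varepsilon(x) = \int_R G(x,y) f_\varepsilon(y)\,\mathrm{d}y$ (Lemma~\ref{lemma_equiv}) depends on the density \emph{everywhere} in $R$ through the globally-supported Neumann Green's function, so even though $\nabla\psi_i^\varepsilon$ localizes the outer integral to module $i$, the value of $\phi_\varepsilon$ there encodes contributions from all other modules. This establishes non-locality.

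To make the contrast rigorous rather than merely structural, I would invoke the non-decay of the Neumann Green's function: $G(x,y)$ does not vanish for $|x - y|$ large (it has a logarithmic singularity on the diagonal and remains bounded away from zero across the domain, reflecting the global charge-neutrality constraint $\int_R \phi_\varepsilon = 0$). Consequently $\partial\phi_\varepsilon/\partial c_j \neq 0$ for $j \neq i$ in general, so moving a distant module $j$ changes the force on module $i$; whereas $\partial F_i^{\mathrm{Var}}/\partial c_j = 0$ whenever the supports of modules $i$ and $j$ (thickened by $\varepsilon$) are disjoint. The main obstacle, and the point requiring the most care, is stating precisely what ``local'' and ``non-local'' mean: the cleanest formulation is to show that $F_i^{\mathrm{Var}}$ is unchanged if $f_\varepsilon$ is modified outside a neighborhood of module $i$, while $F_i^{\mathrm{Poisson}}$ genuinely depends on such modifications through the nonlocal operator $G$. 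I would therefore phrase the conclusion in terms of this support/dependence dichotomy, since the theorem statement itself is qualitative and the substance lies in exhibiting the Green's-function coupling that the variance penalty lacks.
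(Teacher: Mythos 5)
Your proposal is correct and takes essentially the same route as the paper's proof: both reduce the force to $\int_R \frac{\delta D}{\delta \rho_\varepsilon}\,\nabla\psi_i^\varepsilon(x-c_i)\,\mathrm{d}x$ via the chain rule, identify the functional derivative as $2f_\varepsilon$ for the variance versus $\phi_\varepsilon = G f_\varepsilon$ for the Poisson energy, and conclude locality from the compact support of $\nabla\psi_i^\varepsilon$ and non-locality from the global dependence of $\phi_\varepsilon$ on the density through the Neumann Green's function. Your refinements (the criterion $\partial F_i^{\mathrm{Var}}/\partial c_j = 0$ for modules with $\varepsilon$-thickened disjoint supports, and the precise support/dependence dichotomy) sharpen the paper's informal conclusion; just note that the Neumann Green's function is not literally ``bounded away from zero'' on the domain --- it has zero mean in each variable and hence changes sign --- but its non-compact support is all your argument actually needs.
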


\begin{proof}
Let a general penalty functional be $D(c)$, with placement force on module $M_i$ defined by the negative gradient $f_i = -\nabla_{c_i} D(c)$. Using the chain rule for functionals,
\[
f_i(c) = -\int_R \frac{\delta D}{\delta \rho_\varepsilon(x)} \, \nabla_{c_i} \rho_\varepsilon(x;c) \, \mathrm{d}x,
\]
where $\rho_\varepsilon(x;c) = \sum_{k=1}^n \rho_k^\varepsilon(x-c_k)$. Since
\[
\nabla_{c_i} \rho_\varepsilon(x;c) = \nabla_{c_i} \rho_i^\varepsilon(x-c_i) = -\nabla_x \rho_i^\varepsilon(x-c_i),
\]
we have
\begin{equation}
\label{eq:force_general}
f_i(c) = \int_R \frac{\delta D}{\delta \rho_\varepsilon(x)} \, \nabla_x \rho_i^\varepsilon(x-c_i) \, \mathrm{d}x.
\end{equation}

The locality of $f_i$ is determined by the functional derivative $\frac{\delta D}{\delta \rho_\varepsilon}$:

\textbf{(1) Mollified variance:} For $\mathrm{Var}_\varepsilon(c) = \int_R (\rho_\varepsilon - \bar\rho)^2 \, \mathrm{d}x$, the derivative is
\[
\frac{\delta \mathrm{Var}_\varepsilon}{\delta \rho_\varepsilon} =2( \rho_\varepsilon - \bar\rho) = 2 f_\varepsilon.
\]
Substituting this into \eqref{eq:force_general} gives
\[
f_i^{\mathrm{Var}}(c) = 2 \int_R f_\varepsilon(x;c) \, \nabla_x \rho_i^\varepsilon(x-c_i) \, \mathrm{d}x.
\]
Since $\nabla_x \rho_i^\varepsilon$ has compact support near $M_i$, $f_i^{\mathrm{Var}}$ depends only on nearby density, and is thus local.

\textbf{(2) Poisson energy:} For $E_\varepsilon(c) = \frac{1}{2}\int_R \phi_\varepsilon f_\varepsilon \, \mathrm{d}x$, the derivative is
\[
\frac{\delta E_\varepsilon}{\delta \rho_\varepsilon} = \phi_\varepsilon,
\]
where $\phi_\varepsilon$ solves $-\Delta \phi_\varepsilon = f_\varepsilon$ on $R$. Then
\[
f_i^{E}(c) = \int_R \phi_\varepsilon(x;c) \, \nabla_x \rho_i^\varepsilon(x-c_i) \, \mathrm{d}x.
\]
Although the integral kernel is local, $\phi_\varepsilon$ depends on $f_\varepsilon$ globally. Hence, a perturbation of any distant module $M_j$ alters $f_\varepsilon$ and $\phi_\varepsilon$ everywhere, affecting $f_i^E$. Therefore, $f_i^E$ is non-local.
\end{proof}

This explains why Poisson-based forces efficiently resolve large-scale congestion, while variance-based forces can become trapped in local minima.

\subsection{PeF as a Wasserstein-2 Gradient Flow}
\label{sec:optimal_transport}

The PeF dynamics admit a rigorous interpretation in Optimal Transport. Let the continuous-time velocity field be
\[
v_{\mathrm{PeF}} = - \nabla \phi_\varepsilon, \qquad 
-\Delta_N \phi_\varepsilon = \rho_\varepsilon(t) - \bar\rho =: f_\varepsilon(t)
\]
on a rectangular domain with Neumann boundaries. 
Due to space limitations, we present here a formal derivation of the Wasserstein--2 gradient flow structure for the Poisson energy. This derivation assumes sufficient smoothness of the density $\rho_\varepsilon(t)$ so that all terms are classically defined and integration by parts is valid.

\begin{theorem}[Wasserstein Gradient Flow Interpretation of the Poisson Energy]
\label{thm:pef_ot_interpretation}
Let $\rho_\varepsilon(t,x)$ denote the time-dependent mollified density over a bounded domain $R \subset \mathbb{R}^2$ with reflecting (Neumann) boundary conditions.
Consider the Poisson energy functional
\[
E[\rho_\varepsilon]
= \frac{1}{2} \int_R (\rho_\varepsilon - \bar\rho)\, (-\Delta_N)^{-1} (\rho_\varepsilon - \bar\rho)\, \mathrm{d}x
= \frac{1}{2} \int_R f_\varepsilon \, \phi_\varepsilon \, \mathrm{d}x,
\]
where $f_\varepsilon = \rho_\varepsilon - \bar\rho$ and $\phi_\varepsilon$ solves the Neumann Poisson equation $-\Delta \phi_\varepsilon = f_\varepsilon$ with $\partial_n \phi_\varepsilon = 0$ and $\int_R \phi_\varepsilon = 0$.
Then the evolution equation
\[
\partial_t \rho_\varepsilon = \nabla \cdot \big(\rho_\varepsilon \nabla \phi_\varepsilon \big)
\]
is precisely the Wasserstein–2 gradient flow of the energy $E[\rho_\varepsilon]$.
Moreover, $E[\rho_\varepsilon]$ is a strict Lyapunov functional: it is nonincreasing along trajectories and satisfies
\[
\frac{\mathrm{d}}{\mathrm{d}t} E[\rho_\varepsilon(t)] = - \int_R \rho_\varepsilon |\nabla \phi_\varepsilon|^2 \, \mathrm{d}x \le 0,
\]
with equality if and only if $\rho_\varepsilon \equiv \bar\rho$.
\end{theorem}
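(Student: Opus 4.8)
The plan is to proceed via the Otto-calculus characterization of Wasserstein-2 gradient flows and then to establish the dissipation identity by a direct differentiation-plus-integration-by-parts computation. Recall that for a density evolving with conserved total mass, the $W_2$ gradient flow of a functional $E[\rho]$ is the continuity equation $\partial_t\rho = \nabla\cdot\!\big(\rho\,\nabla(\delta E/\delta\rho)\big)$, with velocity field $v = -\nabla(\delta E/\delta\rho)$. Hence the first task is to identify the first variation of the Poisson energy. Since $\bar\rho$ is a fixed constant, $\delta E/\delta\rho_\varepsilon = \delta E/\delta f_\varepsilon$, and by Theorem~\ref{thm:smoothproxy} this equals $G f_\varepsilon = \phi_\varepsilon$ (the additive normalization $\int_R\phi_\varepsilon=0$ is irrelevant, as only $\nabla\phi_\varepsilon$ enters). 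Substituting $\delta E/\delta\rho_\varepsilon = \phi_\varepsilon$ into the abstract flow immediately yields $\partial_t\rho_\varepsilon = \nabla\cdot(\rho_\varepsilon\nabla\phi_\varepsilon)$, which is exactly the claimed evolution. I would also note that the no-flux boundary term $\rho_\varepsilon\,\partial_n\phi_\varepsilon=0$ guarantees mass conservation, so $\bar\rho$ stays the correct spatial mean along the trajectory and $\partial_t f_\varepsilon = \partial_t\rho_\varepsilon$.

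For the Lyapunov property I would differentiate $E[\rho_\varepsilon(t)]$ in time. Because $E$ is the quadratic form $\tfrac12\langle f_\varepsilon, G f_\varepsilon\rangle$ with $G$ self-adjoint, the chain rule gives $\frac{\mathrm{d}}{\mathrm{d}t}E = \langle \partial_t f_\varepsilon, G f_\varepsilon\rangle = \int_R \phi_\varepsilon\,\partial_t\rho_\varepsilon\,\mathrm{d}x$. Inserting the evolution equation and integrating by parts, the boundary contribution vanishes by the Neumann condition $\partial_n\phi_\varepsilon=0$, leaving $\frac{\mathrm{d}}{\mathrm{d}t}E = -\int_R \rho_\varepsilon|\nabla\phi_\varepsilon|^2\,\mathrm{d}x$. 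Non-positivity is then immediate from $\rho_\varepsilon\ge 0$.

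For the equality case, the forward direction is trivial: $\rho_\varepsilon\equiv\bar\rho$ gives $f_\varepsilon=0$, hence $\phi_\varepsilon\equiv 0$ and zero dissipation. The reverse direction is where I expect the main subtlety: vanishing dissipation gives $\rho_\varepsilon|\nabla\phi_\varepsilon|^2=0$ a.e., which only forces $\nabla\phi_\varepsilon=0$ on the support $\{\rho_\varepsilon>0\}$. To conclude $\phi_\varepsilon\equiv\text{const}$ (hence $\phi_\varepsilon\equiv 0$ by the zero-mean normalization, and therefore $f_\varepsilon=-\Delta\phi_\varepsilon=0$), I would invoke the standing smoothness/positivity hypothesis $\rho_\varepsilon>0$ on the connected domain $R$, under which $\nabla\phi_\varepsilon=0$ a.e. forces $\phi_\varepsilon$ constant. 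I would flag explicitly that without strict positivity this characterization is more delicate, since $\phi_\varepsilon$ could in principle remain nonconstant on a region where $\rho_\varepsilon$ vanishes.

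The hardest part is the rigorous identification with the $W_2$ gradient flow rather than the algebra. A fully rigorous argument would verify the metric-space definition (the energy–dissipation equality, or equivalently the JKO minimizing-movement limit) in $(\mathcal{P}_2(R), W_2)$, which needs well-posedness and a displacement-convexity or slope estimate for the nonlocal energy $E$. Since the theorem is stated as a \emph{formal} derivation under smoothness assumptions, I would present the Otto-calculus identification together with the dissipation identity as the core content, and defer the metric-level justification to the cited optimal-transport literature.
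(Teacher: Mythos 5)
Your proposal matches the paper's proof essentially step for step: the Otto-calculus identification $v = -\nabla(\delta E/\delta\rho_\varepsilon) = -\nabla\phi_\varepsilon$ via the first variation $\delta E/\delta\rho_\varepsilon = \phi_\varepsilon$, followed by the chain-rule and integration-by-parts dissipation computation with the boundary term eliminated by the Neumann condition, and deferral of metric-level rigor to the optimal-transport literature. Your flag on the equality case is in fact a point where you are \emph{more} careful than the paper, which passes silently from $\int_R \rho_\varepsilon |\nabla\phi_\varepsilon|^2\,\mathrm{d}x = 0$ to $\nabla\phi_\varepsilon \equiv 0$ even though $\rho_\varepsilon$ vanishes away from the modules; your positivity hypothesis closes this (as would an argument using continuity of $\rho_\varepsilon$, local constancy of $\phi_\varepsilon$ on the open set $\{\rho_\varepsilon > 0\}$ forcing $\rho_\varepsilon = \bar\rho$ there by the equation, and mass conservation), so the proposal is sound.
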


\begin{proof}
By the theory of gradient flows in the $2$-Wasserstein metric, the continuity equation
\[
\partial_t \rho + \nabla \cdot (\rho v) = 0
\]
corresponds to the steepest descent of a functional $E[\rho]$ with respect to the Wasserstein metric if the velocity field satisfies
\[
v = - \nabla \left( \frac{\delta E}{\delta \rho} \right),
\]
where $\delta E/\delta \rho$ denotes the first variation (the $L^2$-Wasserstein gradient).

For the Poisson energy functional,
\[
E[\rho_\varepsilon]
= \frac{1}{2} \langle f_\varepsilon, (-\Delta_N)^{-1} f_\varepsilon \rangle_{L^2(R)},
\]
we have
\[
\frac{\delta E}{\delta \rho_\varepsilon}
= (-\Delta_N)^{-1} f_\varepsilon = \phi_\varepsilon.
\]
Thus the corresponding Wasserstein velocity field is
\[
v_{\mathrm{WGF}} = -\nabla \phi_\varepsilon,
\]
and the continuity equation becomes
\[
\partial_t \rho_\varepsilon = -\nabla \cdot (\rho_\varepsilon v_{\mathrm{WGF}})
= \nabla \cdot (\rho_\varepsilon \nabla \phi_\varepsilon),
\]
which coincides exactly with the evolution equation.

Finally, differentiating $E[\rho_\varepsilon(t)]$ along the flow yields
\begin{align*}
\frac{\mathrm{d}}{\mathrm{d}t} E[\rho_\varepsilon(t)]
&= \int_R \frac{\delta E}{\delta \rho_\varepsilon}\, \partial_t \rho_\varepsilon \, \mathrm{d}x \\
&= \int_R \phi_\varepsilon\, \nabla \cdot (\rho_\varepsilon \nabla \phi_\varepsilon)\, \mathrm{d}x \\
&= - \int_R \nabla\phi_\varepsilon \cdot (\rho_\varepsilon \nabla \phi_\varepsilon) \, \mathrm{d}x + \oint_{\partial R} \phi_\varepsilon (\rho_\varepsilon \nabla \phi_\varepsilon) \cdot \mathbf{n} \, \mathrm{d}S \\
&= -\int_R \rho_\varepsilon |\nabla \phi_\varepsilon|^2 \, \mathrm{d}x \le 0,
\end{align*}
where the last equality follows by integration by parts, and the boundary term vanishes due to the Neumann condition $\partial_n \phi_\varepsilon = \nabla \phi_\varepsilon \cdot \mathbf{n} = 0$ on $\partial R$.

The energy dissipation vanishes if and only if $\nabla \phi_\varepsilon \equiv 0$, i.e., when $\phi_\varepsilon$ is constant, which due to the zero-mean condition implies $\phi_\varepsilon \equiv 0$. This in turn means $f_\varepsilon = -\Delta \phi_\varepsilon = 0$, so $\rho_\varepsilon \equiv \bar\rho$.
Hence $E[\rho_\varepsilon]$ is a strict Lyapunov functional, and the dynamics constitute a Wasserstein–2 gradient flow.
\end{proof}



\paragraph{Gradient Flow Interpretation of PeF.}  
The connection to Wasserstein gradient flows provides an  explanation for the effectiveness of PeF in global optimization. Key implications are as follows:

\begin{itemize}
    \item \textbf{Existence and Uniqueness of the Flow.}
According to Ambrosio, Gigli, and Savar\'e (AGS) theory~\cite{Ambrosio2008,Santambrogio2015}:  
1) \textbf{Existence:} The AGS framework guarantees the existence of a Wasserstein-gradient flow (“curve of maximal slope”) for any proper, lower-semicontinuous functional on $\mathcal{P}_2(R)$. Since the Poisson energy satisfies these conditions, a valid evolution path $\rho_\varepsilon(t)$ always exists. 2) \textbf{Uniqueness and Convergence:} Stronger properties, such as uniqueness and convergence to the uniform state, require \emph{geodesic (displacement) convexity}. The Poisson energy $E[\rho]=\tfrac12 \|\rho-\bar\rho\|_{H^{-1}}^2$ is convex in the linear sense, but displacement convexity on a general domain is non-trivial. Therefore, we rigorously guarantee existence and monotone energy dissipation; uniqueness and global convergence require additional assumptions or analysis.

\item \textbf{Physical Nature: $H^{-1}$ vs $L^2$ Flows.}
The choice of energy determines the flow’s character:
1) \textbf{$L^2$ Flow (Variance / KL):} Produces diffusion-like dynamics (heat equation). Efficient at smoothing short-range, high-frequency fluctuations but slow at balancing density across large distances. 2) \textbf{$H^{-1}$ Flow (PeF):} Strongly penalizes low-frequency modes, producing \emph{non-local transport}. Modules respond to global density imbalances, enabling efficient redistribution of mass over long distances.

\item \textbf{Implications for Practical Optimization.}
1) \textbf{Long-Range Coordination:} Non-local forces allow modules in crowded regions to respond to sparse regions far away, enabling coordinated, large-scale movements.
2) \textbf{Avoiding Poor Local Minima:} Local forces can shift congestion locally, whereas PeF dynamics overcome barriers to global uniformity.
3) \textbf{Fast Convergence:} By addressing large-scale imbalances first, the algorithm makes rapid progress early in optimization.

\item \textbf{Lyapunov Property of the $H^{-1}$ Flow.}
Along the PeF gradient flow, the Poisson energy strictly decreases unless the density is uniform:
\[
\frac{\mathrm{d}}{\mathrm{d}t} E[\rho_\varepsilon(t)]
= - \int_R \rho_\varepsilon |\nabla \phi_\varepsilon|^2 \, \mathrm{d}x \le 0,
\]
where $-\Delta_N \phi_\varepsilon = \rho_\varepsilon - \bar\rho$. Equality holds iff $\nabla \phi_\varepsilon \equiv 0$, i.e., $\rho_\varepsilon \equiv \bar\rho$. Thus $E[\rho_\varepsilon]$ is a strict Lyapunov functional, guaranteeing monotone energy dissipation and convergence to the uniform density under suitable compactness conditions.
\end{itemize}

\section{Conclusion}
\label{sec:conclusion}

This paper establishes a mathematical foundation for the Poisson Equation based Floorplanning (PeF) methodology. We demonstrate that the method's core penalty, the Poisson energy, acts as a spectrally-weighted low-pass filter on the density variance, providing a global optimization mechanism. This behavior is interpreted as a Wasserstein gradient flow from Optimal Transport theory. Our main theoretical guarantees include a quantitative linear bound connecting the energy functional to geometric overlap, an analysis of fast local linear convergence to desirable solutions, and a guarantee of the solution's stability. These results certify the PeF model not merely as an effective heuristic, but as a principled and analyzable framework for solving the discrete floorplanning problem via continuous optimization.

\bibliographystyle{plain}

\appendix

\section{Mathematical and Algorithmic Preliminaries}
\label{app:preliminaries}

This appendix provides background material on the functional-analytic setting and the standard convergence guarantees for the Projected Gradient Descent (PGD) algorithm that are used in the main text.

\subsection{Mathematical Preliminaries}

Our analysis is set in the Hilbert space of square-integrable functions $L^2(R)$, equipped with the inner product $\langle f, g \rangle = \int_R f(x)g(x) \, \mathrm{d}x$. We frequently use the Sobolev space $H^1(R)$, which consists of $L^2$ functions whose weak first derivatives are also in $L^2$.

The Neumann Laplacian $-\Delta_N$ is the self-adjoint operator on $L^2(R)$ associated with the quadratic form $\int_R |\nabla u|^2 \, \mathrm{d}x$ for functions $u \in H^1(R)$. It possesses a discrete, non-negative spectrum $0 = \lambda_0 < \lambda_1 \le \lambda_2 \le \dots$ and a corresponding complete orthonormal basis of eigenfunctions $\{\psi_k\}_{k\ge 0}$.

The subspace of zero-mean functions, denoted by $L^2_0(R)$, is the orthogonal complement to the constant functions (which form the null space of $-\Delta_N$). The properties of the inverse operator $G := (-\Delta_N)^{-1}$ on this subspace are central to our main analysis and are introduced in Section \ref{sec:formulation}.

\subsection{Notational Summary}

\begin{table}[h!]
\caption{Summary of notations used throughout the paper.}
\centering
\renewcommand{\arraystretch}{1.2}
\begin{tabular}{ll}
\hline
Symbol & Meaning \\ \hline
$R\subset\mathbb{R}^2$ & Placement region (bounded Lipschitz)\\
$M_i$ & $i$-th module of area $A_i$\\
$\rho,\bar\rho$ & Density and mean density\\
$r=\rho-\bar\rho$ & Raw residual density\\
$\rho_\varepsilon=\eta_\varepsilon*\rho$ & Mollified density\\
$f_\varepsilon=\rho_\varepsilon-\bar\rho$ & Mollified residual\\
$G=(-\Delta_N)^{-1}$ & Inverse Neumann Laplacian on $L^2_0(R)$\\
$E,E_\varepsilon$ & Poisson energy (raw / mollified)\\
$\mathrm{Var}_\varepsilon=\|f_\varepsilon\|_{L^2}^2$ & Mollified density variance\\
$\lambda_k,\psi_k$ & Neumann Laplacian eigenpairs\\
\hline
\end{tabular}
\end{table}

\subsection{Standard Convergence Results for Projected Gradient Descent}
\label{app:pgd}

The main text cites standard convergence results for the Projected Gradient Descent (PGD) algorithm on a non-convex but smooth objective function $F(c)$. For completeness, we state them formally here. Consider the problem $\min_{c \in \Omega} F(c)$ where $\Omega$ is a closed convex set and $\nabla F$ is $L$-Lipschitz continuous. The PGD iteration is
\[
c^{k+1} = \Pi_\Omega\big(c^k - \alpha_k \nabla F(c^k)\big).
\]

\begin{theorem}[Convergence to Stationary Points]
\label{thm:app_pgd_convergence_stationary}
If the step size sequence $\{\alpha_k\}$ satisfies the Robbins--Monro conditions:
\[ \sum_{k=1}^\infty \alpha_k = \infty, \quad \sum_{k=1}^\infty \alpha_k^2 < \infty, \]
then any limit point $c^*$ of the sequence $\{c^k\}$ generated by PGD is a stationary point of the problem.
\end{theorem}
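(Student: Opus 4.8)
The plan is to prove the \emph{strong} statement---that \textbf{every} limit point is stationary, not merely that some subsequence yields a stationary point---by combining a sufficient-decrease estimate with a crossing argument that rules out non-stationary accumulation points by contradiction. Throughout I write $d^k := c^{k+1}-c^k$, and for a step size $\alpha>0$ I let $G_\alpha(c):=\tfrac1\alpha\big(c-\Pi_\Omega(c-\alpha\nabla F(c))\big)$ denote the projected-gradient (``gradient mapping''), so that $d^k=-\alpha_k G_{\alpha_k}(c^k)$. Recall that $c^*$ is stationary for $\min_{c\in\Omega}F$ if and only if the natural residual $r(c):=c-\Pi_\Omega(c-\nabla F(c))=G_1(c)$ vanishes; since the projection onto the closed convex $\Omega$ is nonexpansive and $\nabla F$ is continuous, $r$ is continuous (see \cite{Bertsekas1999}).

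First I would establish a quantitative descent inequality. The variational characterization of the projection $c^{k+1}=\Pi_\Omega(c^k-\alpha_k\nabla F(c^k))$, tested against $y=c^k$, gives $\langle\nabla F(c^k),d^k\rangle\le-\tfrac1{\alpha_k}\|d^k\|^2$, which, inserted into the $L$-smoothness bound $F(c^{k+1})\le F(c^k)+\langle\nabla F(c^k),d^k\rangle+\tfrac L2\|d^k\|^2$, yields
\[
F(c^{k+1})\le F(c^k)-\Big(\tfrac1{\alpha_k}-\tfrac L2\Big)\|d^k\|^2 .
\]
Since $\sum_k\alpha_k^2<\infty$ forces $\alpha_k\to0$, we have $\alpha_k\le\min\{1/L,1\}$ for all large $k$, so the bracket exceeds $\tfrac1{2\alpha_k}$ and $F(c^{k+1})\le F(c^k)-\tfrac{\alpha_k}2\|G_{\alpha_k}(c^k)\|^2$. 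If $c^*$ is any limit point with $c^{k_j}\to c^*$, continuity of $F$ together with the eventual monotonicity of $\{F(c^k)\}$ forces the whole sequence $F(c^k)\downarrow F(c^*)=:F^\star$; telescoping then gives the key summability $\sum_k \alpha_k\,\|G_{\alpha_k}(c^k)\|^2<\infty$. Because $\sum_k\alpha_k=\infty$, this in turn yields $\liminf_k\|G_{\alpha_k}(c^k)\|=0$.

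The heart of the argument---and the main obstacle---is upgrading this $\liminf$ statement to the claim that \emph{no} limit point is non-stationary, since the $\liminf$ alone certifies stationarity only along a specially chosen subsequence. Suppose for contradiction that a limit point $c^*$ satisfies $r(c^*)\neq0$. By continuity of $r$ choose $\rho,\delta>0$ with $\|r(c)\|\ge\delta$ on the ball $B(c^*,\rho)$; using the standard monotonicity of the gradient-mapping norm in the step size (\cite{Beck2017}), $\|G_{\alpha_k}(c)\|\ge\|G_1(c)\|=\|r(c)\|$ whenever $\alpha_k\le1$, so $g_k:=\|G_{\alpha_k}(c^k)\|\ge\delta$ for every large $k$ with $c^k\in B(c^*,\rho)$. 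Since $c^*$ is a limit point, $c^k$ enters $B(c^*,\rho/2)$ infinitely often, while $\liminf_k g_k=0$ forces $c^k$ to leave $B(c^*,\rho)$ infinitely often (inside the ball $g_k\ge\delta$); hence the iterates perform infinitely many disjoint ``excursions'' from $B(c^*,\rho/2)$ to the exterior of $B(c^*,\rho)$. For each excursion $s\le k<t$ (with $c^s\in B(c^*,\rho/2)$, $c^t\notin B(c^*,\rho)$, and all intermediate iterates in $B(c^*,\rho)$) the triangle inequality gives $\sum_{k=s}^{t-1}\|d^k\|\ge\|c^t-c^s\|>\rho/2$, and since $\|d^k\|=\alpha_k g_k$ with $g_k\ge\delta$ throughout, I would bound
\[
\sum_{k=s}^{t-1}\alpha_k g_k^2\ \ge\ \delta\sum_{k=s}^{t-1}\alpha_k g_k\ =\ \delta\sum_{k=s}^{t-1}\|d^k\|\ \ge\ \tfrac{\delta\rho}{2}.
\]
Summing this fixed positive amount over the infinitely many disjoint excursions shows $\sum_k\alpha_k g_k^2=\infty$, contradicting the summability established above. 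Therefore $r(c^*)=0$ and $c^*$ is stationary.

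I expect the bookkeeping of the disjoint excursions---ensuring each window genuinely starts inside $B(c^*,\rho/2)$, exits $B(c^*,\rho)$, and that distinct windows do not overlap so their lower bounds add---to be the most delicate part; the remaining ingredients, namely the projection inequality, the descent estimate, the convergence of $F(c^k)$, and the step-size monotonicity of $\|G_\alpha\|$, are standard and citable from \cite{Beck2017,Bertsekas1999}.
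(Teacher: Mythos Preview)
The paper does not supply its own proof of this appendix theorem; it simply defers to \cite{Beck2017}. Your argument is precisely the standard textbook proof one finds in such references: the projection inequality combined with $L$-smoothness gives the sufficient-decrease estimate, telescoping yields $\sum_k \alpha_k\|G_{\alpha_k}(c^k)\|^2<\infty$, and the excursion (crossing) argument then rules out non-stationary accumulation points. The proof is correct; the only bookkeeping to tighten---which you already flagged---is the inductive construction of the disjoint windows, most cleanly done by setting $s_m$ to be the first index $\ge t_{m-1}$ with $c^{s_m}\in B(c^*,\rho/2)$ and $t_m$ the first index $>s_m$ with $c^{t_m}\notin B(c^*,\rho)$.
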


\begin{theorem}[Sublinear Convergence Rate]
\label{thm:app_pgd_convergence_rate}
If the algorithm uses a fixed step size $\alpha$ satisfying $0 < \alpha \le 1/L$, then PGD has a sublinear convergence rate. Specifically, after $K$ iterations, the minimum squared norm of the gradient mapping is bounded by:
\[
\min_{0 \le k < K} \|G_\alpha(c^k)\|^2 = \mathcal{O}(1/K),
\]
where the gradient mapping is $G_\alpha(c) := \frac{1}{\alpha}\left(c - \Pi_\Omega(c - \alpha \nabla F(c))\right)$. The norm $\|G_\alpha(c)\|$ is zero if and only if $c$ is a stationary point.
\end{theorem}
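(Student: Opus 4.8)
The plan is to establish a per-iteration \emph{sufficient decrease} inequality of the form $F(c^{k+1}) \le F(c^k) - \tfrac{\alpha}{2}\|G_\alpha(c^k)\|^2$ and then telescope it over $k=0,\dots,K-1$. Two classical ingredients drive the argument. First, the descent lemma: since $\nabla F$ is $L$-Lipschitz, for all $x,y$ one has $F(y) \le F(x) + \langle \nabla F(x), y-x\rangle + \tfrac{L}{2}\|y-x\|^2$. Second, the variational characterization of the Euclidean projection onto the closed convex set $\Omega$: writing $c^{k+1} = \Pi_\Omega(c^k - \alpha\nabla F(c^k))$, the obtuse-angle inequality $\langle c^k - \alpha\nabla F(c^k) - c^{k+1},\, y - c^{k+1}\rangle \le 0$ holds for every $y \in \Omega$.

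First I would rewrite the iteration through the gradient mapping, noting that by definition $c^{k+1} = c^k - \alpha G_\alpha(c^k)$, so $c^k - c^{k+1} = \alpha G_\alpha(c^k)$. Specializing the projection inequality to $y = c^k$ and dividing by $\alpha^2 > 0$ yields the key estimate $\|G_\alpha(c^k)\|^2 \le \langle \nabla F(c^k), G_\alpha(c^k)\rangle$, which converts the projection geometry into an inner-product bound. Substituting $x = c^k$ and $y = c^{k+1}$ into the descent lemma, then inserting this bound and using $\alpha \le 1/L$ (so that $1 - \tfrac{L\alpha}{2} \ge \tfrac12$), produces the sufficient decrease inequality $F(c^{k+1}) \le F(c^k) - \tfrac{\alpha}{2}\|G_\alpha(c^k)\|^2$.

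Then I would telescope: summing over $k = 0,\dots,K-1$ collapses the left side to $\tfrac{\alpha}{2}\sum_{k=0}^{K-1}\|G_\alpha(c^k)\|^2 \le F(c^0) - F(c^K) \le F(c^0) - F^*$, where $F^* := \inf_\Omega F$ is finite (in the abstract statement this is the standing boundedness-below hypothesis; in our floorplanning context it follows since $F$ is continuous and $\mathcal{C}_R$ is compact by Assumption~\ref{assum_a4}). Bounding the minimum by the average then gives $\min_{0\le k<K}\|G_\alpha(c^k)\|^2 \le \tfrac{2(F(c^0)-F^*)}{\alpha K} = \mathcal{O}(1/K)$. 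The closing equivalence follows at once: $G_\alpha(c)=0$ is exactly the fixed-point condition $c = \Pi_\Omega(c - \alpha\nabla F(c))$, which is precisely the first-order stationarity condition for the constrained problem.

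I expect the main obstacle to be the clean derivation of the inner-product bound $\|G_\alpha(c^k)\|^2 \le \langle \nabla F(c^k), G_\alpha(c^k)\rangle$ from the projection's variational inequality, since this is the single step where the constrained (projected) geometry genuinely enters and must be merged correctly with the unconstrained descent lemma; once it is in hand, the telescoping and averaging are entirely routine.
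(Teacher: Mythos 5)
Your proof is correct, and it is essentially the same argument the paper implicitly relies on: the paper provides no proof of this theorem, deferring to Beck's \emph{First-Order Methods in Optimization}, and your derivation—projection inequality at $y=c^k$ giving $\|G_\alpha(c^k)\|^2 \le \langle \nabla F(c^k), G_\alpha(c^k)\rangle$, merged with the descent lemma and $\alpha \le 1/L$ to obtain the sufficient decrease $F(c^{k+1}) \le F(c^k) - \tfrac{\alpha}{2}\|G_\alpha(c^k)\|^2$, then telescoping and bounding the minimum by the average—is precisely the standard proof found there. You also correctly handle the one hypothesis the abstract statement leaves tacit, namely boundedness of $F$ from below on $\Omega$, by noting it follows in the paper's setting from continuity of $F$ and compactness of the feasible set in Assumption~\ref{assum_a4}.
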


 For a detailed proof, see, e.g., \cite{Beck2017}.

\paragraph{Gradient Lipschitz constant.}
Let $\eta_\varepsilon$ be a $C^1$ mollifier satisfying
$\|\nabla\eta_\varepsilon\|_{L^1} = O(\varepsilon^{-1})$.
Then the gradient of the mollified Poisson energy reads
\[
\nabla_c E_\varepsilon(c)
 = J_\varepsilon(c)^\top\, G f_\varepsilon(c),
\]
where $J_\varepsilon$ denotes the Jacobian of $\rho_\varepsilon$ with respect to~$c$.
Under bounded module areas and uniformly bounded overlap of
$\nabla\psi_i^\varepsilon$ (i.e.\ locally disjoint supports),
one obtains the Lipschitz estimate
\[
\|\nabla E_\varepsilon(c_1)-\nabla E_\varepsilon(c_2)\|
 \le L_\varepsilon\,\|c_1-c_2\|,
 \qquad L_\varepsilon = O(\varepsilon^{-2}).
\]
The scaling $O(\varepsilon^{-2})$ arises because one factor
of~$\varepsilon^{-1}$ comes from the spatial gradient of the mollifier,
and another from the sensitivity of the smoothed density with respect to
module displacements.
Hence a stable step size in projected gradient descent satisfies
$\eta \lesssim \varepsilon^{2}$.
In practice, a coarse-to-fine schedule—gradually decreasing~$\varepsilon$—
maintains numerical stability while improving placement accuracy.

\paragraph{Lipschitz constant estimate (derivation).}
A detailed estimate follows from the convolution property
\[
\|\nabla\rho_\varepsilon(c_1)-\nabla\rho_\varepsilon(c_2)\|_{L^2}
\le
\|\nabla\eta_\varepsilon\|_{L^1}\,
\|\rho(c_1)-\rho(c_2)\|_{L^2},
\]
and the operator bound
\[
\|G\|_{L^2\to H^1} \le C.
\]
Combining these gives
\[
L_\varepsilon
\;\le\;
C\,\|\nabla\eta_\varepsilon\|_{L^1}^2
\;=\; O(\varepsilon^{-2}),
\]
since $\|\nabla\eta_\varepsilon\|_{L^1}=O(\varepsilon^{-1})$ for standard $C^1$ mollifiers. The two powers of $\varepsilon^{-1}$ respectively stem from the spatial
gradient of the smoothing kernel and from the sensitivity of the smoothed density with respect to module displacement.


\section{Proof of Geometric Lemma \ref{lem:areabound} for Eroded Sets}
\label{app:proof_of_lemma3.1}

\begin{proof}
The proof relies on a foundational result from geometric measure theory: for any set $M$ of finite perimeter, its area decreases under erosion by at most its perimeter times the erosion radius \cite{Federer1969,Schneider2014}:
\begin{equation} \label{eq:erosion_bound_single}
|M| - |M^{-\varepsilon}| \le \varepsilon P(M).
\end{equation}

By definition, the difference between the original and eroded overlap sets is
\[
|\mathcal{O}(c)| - |\mathcal{O}_\varepsilon(c)| = |\mathcal{O}(c) \setminus \mathcal{O}_\varepsilon(c)|.
\]
Any point in $\mathcal{O}(c) \setminus \mathcal{O}_\varepsilon(c)$ lies in a portion removed from at least one module during erosion. Hence, we have the set inclusion
\[
\mathcal{O}(c) \setminus \mathcal{O}_\varepsilon(c) \subseteq \bigcup_{i=1}^n \big( (M_i+c_i) \setminus (M_i^{-\varepsilon}+c_i) \big).
\]

Applying monotonicity, subadditivity of the Lebesgue measure, and translation invariance gives
\begin{align*}
|\mathcal{O}(c) \setminus \mathcal{O}_\varepsilon(c)| 
&\le \sum_{i=1}^n \left| (M_i+c_i) \setminus (M_i^{-\varepsilon}+c_i) \right| \\
&= \sum_{i=1}^n \big( |M_i| - |M_i^{-\varepsilon}| \big) 
\le \sum_{i=1}^n \varepsilon P(M_i) = \varepsilon P_\Sigma,
\end{align*}
where we apply \eqref{eq:erosion_bound_single} in the last step.  

Since  $\mathcal{O}_\varepsilon(c) \subseteq \mathcal{O}(c)$, we have $|\mathcal{O}(c) \setminus \mathcal{O}_\varepsilon(c)| = |\mathcal{O}(c)| - |\mathcal{O}_\varepsilon(c)|$. Substituting this into the inequality above yields:
\[
    |\mathcal{O}(c)| - |\mathcal{O}_\varepsilon(c)| \le \varepsilon P_\Sigma.
\]
Rearranging the terms gives the desired bound:
\[
    |\mathcal{O}_\varepsilon(c)| \ge |\mathcal{O}(c)| - \varepsilon P_\Sigma.
\]
Finally, since area is non-negative, the lower bound cannot be negative. We thus apply the operator $(\cdot)_+ = \max(\cdot, 0)$ to the right-hand side to ensure a non-negative bound, which completes the proof.
\end{proof}

\end{document}